\definecolor{darkblue}{rgb}{0.0, 0.0, 0.55}
\renewcommand{\selectlanguage}[1]{}
	\definecolor{BLACK}{gray}{0}
	\definecolor{WHITE}{gray}{1}
	\definecolor{RED}{rgb}{1,0,0}
	\definecolor{GREEN}{rgb}{0,1,0}
	\definecolor{BLUE}{rgb}{0,0,1}
	\definecolor{CYAN}{cmyk}{1,0,0,0}
	\definecolor{MAGENTA}{cmyk}{0,1,0,0}
	\definecolor{YELLOW}{cmyk}{0,0,1,0}
\definecolor{myurlcolor}{rgb}{0,0,0.7}
\newtheorem{lemma}{Lemma}
\newtheorem{result}{Result}
\newtheorem{proposition}{Proposition}
\def\tr{\operatorname{tr}}
\def\>{\rangle}
\def\<{\langle}
\newsavebox\myboxA
\newsavebox\myboxB
\newlength\mylenA
\newcommand*\xoverline[2][0.75]{%
    \sbox{\myboxA}{$\m@th#2$}%
    \setbox\myboxB\null
    \ht\myboxB=\ht\myboxA%
    \dp\myboxB=\dp\myboxA%
    \wd\myboxB=#1\wd\myboxA
    \sbox\myboxB{$\m@th\overline{\copy\myboxB}$}
    \setlength\mylenA{\the\wd\myboxA}
    \addtolength\mylenA{-\the\wd\myboxB}%
    \ifdim\wd\myboxB<\wd\myboxA%
       \rlap{\hskip 0.5\mylenA\usebox\myboxB}{\usebox\myboxA}%
    \else
        \hskip -0.5\mylenA\rlap{\usebox\myboxA}{\hskip 0.5\mylenA\usebox\myboxB}%
    \fi}
\begin{document}
\title{Quantum waste management: Utilizing residual states in quantum information processing}

\author{Karol Horodecki}\email{karol.horodecki@ug.edu.pl} 
\affiliation{Institute of Informatics, National Quantum Information Centre, Faculty of Mathematics, Physics and Informatics, University of Gda\'nsk, Wita Stwosza 57, 80-308 Gda\'nsk, Poland}
\affiliation{International Centre for Theory of Quantum Technologies, University of Gda\'nsk, Wita Stwosza 63, 80-308 Gda\'nsk, Poland}
\affiliation{ School of Electrical and Computer Engineering, Cornell University, Ithaca, New York 14850, USA}

\author{Chirag Srivastava}\email{chirag.srivastava@ug.edu.pl}
\affiliation{Institute of Informatics, National Quantum Information Centre, Faculty of Mathematics, Physics and Informatics, University of Gda\'nsk, Wita Stwosza 57, 80-308 Gda\'nsk, Poland}

\author{Leonard Sikorski }\email{leonard.sikorski@phdstud.ug.edu.pl}
\affiliation{Institute of Informatics, National Quantum Information Centre, Faculty of Mathematics, Physics and Informatics, University of Gda\'nsk, Wita Stwosza 57, 80-308 Gda\'nsk, Poland}

\author{Siddhartha Das} \email{das.seed@iiit.ac.in} 
\affiliation{Center for Security, Theory and Algorithmic Research, Centre for Quantum Science and Technology, International Institute of Information Technology, Hyderabad, Gachibowli, Telangana 500032, India}

\begin{abstract}
Quantum resource theories use distillation protocols to convert less resourceful states into fully resourceful ones. However, these protocols often also generate an additional, unused output—referred to as a residual.
We propose a framework for the quantum residual management, in which states discarded after a resource distillation protocol are repurposed as inputs for subsequent quantum information tasks. This approach extends conventional quantum resource theories by incorporating secondary resource extraction from residual states, thereby enhancing overall resource utility. As a concrete example, we investigate the distillation of private randomness from the residual states remaining after quantum key distribution (QKD). More specifically, we quantitatively show that after performing a well-known coherent Devetak-Winter protocol, one can locally extract private randomness from its residual. We further consider the Gottesman-Lo QKD protocol and provide the achievable rate of private randomness from the discarded states that are left after its performance. We also provide a formal framework that highlights a general principle for improving quantum resource utilization across sequential information processing tasks. 
\end{abstract}

\maketitle

{\it Introduction}.--- In quantum information science, quantum states and operations can possess unique characteristics that make them resourceful for performing specific information-processing tasks~\cite{Got98,HHHH09,KDWW21,DBB+25,BGD25b}. Various well-studied resource properties include entanglement~\cite{HHHH09}, nonlocality~\cite{BGS05,Kar21,GCK+24}, steering~\cite{GA15,UCN+20}, coherence~\cite{SSD+15,WY16,CH16}, non-stabilizerness (commonly termed magic)~\cite{VHGE14,HC17}, purity~\cite{HO12,TKEP17}, non-Gaussianity~\cite{AGPF18,RLFT21}, athermality~\cite{BHO+13}, activity~\cite{SDC21,SSC23}, and non-Markovianity~\cite{Wak17,GPG+25}, among others. For example, entanglement underpins secure quantum key distribution in the presence of quantum adversaries~\cite{DBWH21}; nonlocality is essential for tasks such as device-independent quantum key distribution and certified randomness generation~\cite{BCP+19}; coherence and magic are central to the power of quantum computation~\cite{VHGE14,HC17}; and activity is instrumental in work extraction processes~\cite{ABN04}. The pursuit and manipulation of such resources are not only of foundational interest but are also driven by practical imperatives. Because of intrinsic constraints imposed by quantum theory and current technological limitations, the preparation, availability, and control of these resourceful states are both challenging and critically important.

 The framework of quantum resource theories offers a systematic approach to quantitatively assess the resourcefulness of quantum states and operations. It also provides protocols for distilling pure states that possess maximal resource content from an initial supply of noisy, less resourceful states, using operations that do not increase the resourcefulness of the quantum system~\cite{HO12,CG19,LBT19}. There are resource theories with multiple distinct resources whose interconversion is discussed~\cite{SdS+20}, as well as resource theories involving the simultaneous distillation of entanglement and coherence~\cite{CH16}. Both of which differ from the main focus of this work, as we will see.
 
 Within any resource theory, states that contain no resource are termed free states, and operations that do not increase the resource content are known as free operations. During the process of distillation, the transformation from noisy to highly resourceful states inevitably generates quantum states that are discarded as byproducts. We refer to these byproducts as residual states of the resource distillation process. Such states are typically ignored or treated as byproducts, akin to garbage, without further use.

 \textit{Idea}.--- Our everyday experience unequivocally demonstrates that value is often subjective, a principle succinctly captured by the adage `one person's trash is another person's treasure.' This idea can be extended to quantum information processing, where efficient resource management is increasingly important. We propose a novel framework for efficiently utilizing residual quantum states in sequential resource distillation. We investigate and illustrate, through various examples, methods for repurposing quantum states that remain after the initial distillation of a resource. These `leftover' states, if they retain sufficient utility, can then be used to distill a different quantum resource, thereby optimizing overall resource usage efficiency in quantum information processing.

 This perspective aligns with ongoing efforts in the Quantum Energy Initiative~\cite{ISM17,Auffves2022,JM23,MY25}, which seeks to understand and optimize energetic and informational costs in quantum technologies~\cite{HWS+25,BGD25b}. Related developments include  studies on recycling of nonlocality~\cite{Silva15,Curchod17,Brown20,Mal16,Steffinlongo2022,Sasmal2024} and entanglement~\cite{SHO13,Roy2021,PSS22,SPS25,MSSS25,Das25,kopszak25}, sample-efficient work extraction from unknown sources~\cite{CDG+25}. There also exists a work that enhances a state conversion process by reprocessing copies of states for which the conversion initially failed~\cite{Morelli22}. Note that all of these studies focus on sequential attempts to witness or improve the efficiency of a single resource. In contrast, our work opens the study of distilling different resources in sequential attempts.

 \textit{Main results}.--- We investigate the proposed concept of quantum residual management through the extraction of secondary resources from residual states left after a primary quantum information processing task. Specifically, we investigate the distillation of private randomness from residual states that remain after quantum key distribution (QKD) between two users. Both private randomness distillation and QKD represent fundamental quantum information processing tasks that are essential for secure communication against quantum adversaries~\cite{DM02,GL03,YHW19}.
 
In our work, we analyze residual management for two well-known QKD protocols: the Gottesman–Lo protocol~\cite{GL03} and  (coherent version of) Devetak–Winter protocol~\cite{DW}. 
For each case, we propose a method to extract private randomness from the corresponding residual states.
The problem we pose is not trivial, since by definition these two resources are complementary: maximal (local) private randomness is achieved for a local pure state, while maximal key is achieved on a shared entangled state, e.g., a maximally entangled one.
We derive achievable rates of private randomness distillation from the residual states left after private key distillation, using both the Gottesman–Lo (GL) and the (coherent version of) Devetak–Winter (DW) protocols. In the case of the GL protocol, we compose it with the private randomness generation protocol introduced in \cite{YHW19}. The private randomness rate from the residual of the GL protocol is non-negligible, reaching $\approx 0.114$ in the case where a non-zero key is guaranteed from the isotropic state.
Since the latter protocol is strongly related to the BBPSSW entanglement distillation protocol \cite{BBPSSW}, we also provide the gain of private randomness distilled from (part of) the residual of the latter protocol.

In case of DW protocol, our results show that all the correlations measured in terms of the quantum mutual information $I(X;B)_{\rho}$  
of a classical-quantum-quantum state $\rho_{XBE}$ (system $E$ is held by eavesdropper and $XB$ is shared by the honest parties) can be split into two independent resources: private key at usual rate $\approx [I(X;B)-I(X;E)]_{\rho}$ and private randomness at rate $\approx I(X;E)_{\rho}$, where private randomness is obtained from the residual of coherent DW key distillation protocol. Moreover, obtaining the latter destroys the shielding system of one of the honest parties. This is a benefit from a cryptographic point of view, since the honest parties need not conceal it from the adversary or spend additional resources on its physical destruction. 

Finally, we propose a formalism for the systematic study of quantum waste management. Given a noisy input resource state in some resource theory, we first distill the preferred resource, which also produces a state (to be discarded) referred to as the residual. This residual state can serve as a resource for some other resource theory, thereby allowing for an additional resource distillation protocol. We call the composition of two such protocols a sequential distillation protocol. In our framework, we restrict the sets of free states and free operations in the first resource theory to include the sets of free states and free operations of the second resource theory. 
This restriction excludes situations where free operations—applied within the secondary resource distillation—might otherwise enable the extraction of more primary resource than what could be obtained using only the free operations defined in the primary resource theory. Furthermore, consider two resource theories, $R_1$ and $R_2$, that share the same set of free operations, with the set of free states of $R_2$ contained within that of $R_1$. In such a case, it is natural to prioritize the distillation of the resource associated with $R_1$ (whenever feasible). This is because the residual state resulting from this process may still allow for subsequent distillation of $R_2$. In contrast, if one begins by distilling the resource in $R_2$, the resulting residual state generally does not permit further distillation of the resource in $R_1$. Based on this framework, we assign each resource a node and connect two nodes with a directed edge reflecting this inclusion relation. This naturally generates a directed acyclic graph (DAG) structure, which we refer to as the Residual Use Graph (RUG). The RUG captures the viable options for performing sequential distillation of multiple resources, one after another.  The proposed scenario describes a single-party resource waste management strategy; however, other approaches can naturally be taken with different restrictions~\cite{dhruvfuture}.

\textit{Illustrations}.---  We begin by presenting two simple (toy) examples that clearly illustrate our proposed idea, followed by more detailed and non-trivial cases of repurposing quantum states in sequential resource distillation tasks of practical interest. This is a warm-up for our main examples, which distill private randomness from the residual states of private key distillation using the Gottesman-Lo protocol and an adapted version of the Devetak-Winter protocol. We consider private randomness as a secondary resource given its wide importance across various applications~\cite{ACC+25,BeraReview}, for example, one-use token generation for online banking, password suggestions in online applications, and secret key generation~\cite{Gisin-crypto}. 
Private randomness is also one of the indisputable successful practical applications of quantum information theory due to quantum randomness generation protocols (see review in~\cite{Pirandola2020}). It is also related to the extractable work from a quantum state~\cite{HHH+05,DS25,BGD25}.

\textit{Notations for entropies}. The Shannon entropy of the random variable $X$ with probability distribution $P_X(x)$ is given as $H(X):=-\sum_xP(x)\log P(x)$. Without loss of generality, we consider $\log$ with base $2$. The von Neumann entropy of a quantum state $\rho_A$ is denoted by $H(A)_{\rho}=H(\rho_A):=-\tr(\rho_{A}\log\rho_{A})$. The quantum mutual information of a bipartite state $\rho_{AB}$ is defined as $I(A;B)_{\rho}:=H(A)_{\rho}-H(A|B)_{\rho}$, where $H(A|B)_{\rho}:=H(AB)_{\rho}-H(B)_{\rho}$ is the conditional von Neumann entropy of $A$ given $B$. For a classical-quantum state, $\rho_{XA}=\sum_x P(x) \op{x}_X\otimes\rho^x_{A}$, where $X$ denotes the classical register and $A$ is the quantum system, the quantum mutual information $I(X;A)_{\rho}=H(A)_{\rho}-\sum_xP(x)H(\rho^x_{A})$.

{\it Toy example 1 (Probabilistic protocol)}:
Consider a probabilistic entanglement distillation protocol~\cite{BBPS96} where a source provides two copies of two-qubit pure entangled state $\ket{\Psi}_{AB}:=a\ket{00}_{AB}+b\ket{11}_{AB}$ shared between Alice ($A$) and Bob ($B$), where $\{|0\rangle,~|1\rangle\}$ is a set of two orthonormal states. We observe that 
$\ket{\Psi}^{\otimes 2}_{AB}=a^2\ket{00}_{A_1B_1}\ket{00}_{A_2B_2}+ab(\ket{00}_{A_1B_1}\ket{11}_{A_2B_2}+\ket{11}_{A_1B_1}\ket{00}_{A_2B_2})+b^2\ket{11}_{A_1B_1}\ket{11}_{A_2B_2}$. We adopt the sequence of protocols as follows. Both Alice and Bob perform an incomplete von Neumann measurement with
projectors $P_{00}=\op{00}_{A_1A_2}$, $P_{11}=\op{11}_{A_1A_2}$, $P_{\mathrm{ent}}=[\op{01}+\op{10}]_{A_1A_2}$. The measurement outcomes are
\begin{enumerate}
\item[(a)] Output `$\mathrm{ent}$':
$\rho_{|P_{\mathrm{ent}}}=|\chi\rangle\langle\chi|$, where $|\chi\rangle=\frac{1}{\sqrt{2}}(\ket{0011}+\ket{1100})_{A_1B_1A_2B_2}
$, with probability $2a^2b^2$;
\item[(b)] Output `$00$':
$\rho_{|P_{\mathrm{random}}}=\op{0000}_{A_1A_2B_1B_2}$, with probability $a^4$;
\item[(c)] Output `$11$': $\rho_{|P_{\mathrm{work}}}=\op{1111}_{A_1A_2B_1B_2}$ with probability $b^4$.
\end{enumerate}
Let $\mathrm{CNOT}_{AB}=\sum_{i,j=0}^1|i\rangle\langle i|_{A}\otimes|i\oplus j\rangle\langle  j|_{B},$ where $\oplus$ denotes the operation sum modulo 2.
The distillation process is now based on the probabilities of outcomes, as follows:
\begin{enumerate}
\item[(a)] Given outcome `${\rm ent}$', Alice and Bob perform $\mathrm{CNOT}_{A_1A_2}$ and $\mathrm{CNOT}_{B_1B_2}$, respectively, to obtain a maximally entangled state $\frac{1}{\sqrt{2}}(\ket{00}+\ket{11})_{A_1B_1}$ and a pure state
$\ket{11}_{A_2B_2}$. The former is the desired maximally entangled pair. They transform the latter by the Hadamard gate $H$ into (the total of)
2 bits of private randomness.
\item[(b)] Given outcome `$00$', each of them apply $H^{\otimes 2}$ gates to obtain (in total) $4$ bits of private randomness.
\item[(c)] Given outcome `$11$', each of them possess $2$ bits of (local) activity (i.e., the most excited state which can be used to extract maximal work~\cite{ABN04,SSC23,CGQ+24})\footnote{We assume here, that the pure state was distributed on energetic degrees of freedom, i.e., not encoded in polarizations of photons. This is the case, e.g., when the singlet state is used in a quantum computer implemented on a platform other than a photonic one.}.
\end{enumerate}
We note that the above example illustrates the idea of managing residuals through a sequence of increasingly less available classes of operations. The distillation of entanglement needed von Neumann measurement and bilateral CNOTs.
Distilling private randomness from purity needed just a single qubit gate-- Hadamard, while distillation of the activity did not require any operation. We could also have chosen to distill private randomness or purity instead of activity. In Fig~\ref{fig:exemplary_RUGs}, we provide graphs depicting the possibility of sequential distillation of different resources starting from a source state. 


\begin{figure}[ht]
\centering
      \begin{tikzpicture}
          [edge from parent/.style={draw, -latex}]
          \node (root) [xshift = -1.5cm] {\(\ket{\Psi}^{\otimes 2}_{AB}\)}
            child { node [xshift = -0.7cm] {entanglement} child { node (a) [xshift = 0.7cm] {private randomness} } 
                }
            child { node {work} };
            \draw [-latex] (root) to node {} (a);
            \node [xshift = 1.5cm] {\(\rho_{AB}^{\otimes n}\)}
            child { node {private key} child { node {private randomness} } };
        \end{tikzpicture}
        
\caption{Pictorial representations of possible sequential resource distillation for the first toy example (on the left) obtained from processing of two copies of a pure entangled state $\ket{\Psi}_{AB}$ and the second toy example and main examples (on right) obtained from processing $n$ copies of some input mixed entangled state $\rho_{AB}$, considered in this work.}
\label{fig:exemplary_RUGs}
\end{figure}
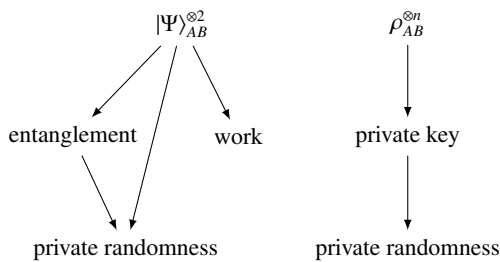

We now focus on two of the most widely studied quantum resources: private randomness and private key, as reducing residuals in these cases is highly desirable from a resource management perspective.

{\it Toy example 2 (A case of accept/abort mechanism in QKD)}: The idea of residual states can often be applied to quantum key distribution (QKD), regardless of whether it's device-dependent (trusting state source and measurement dimensions) or device-independent~\cite{PAB+09,VV19}. A fundamental aspect of all QKD protocols is their reliance on Quantum Bit Error Rate (QBER) estimation. Should the QBER be too high to generate a shared secret key—a random correlated bit string known only to the legitimate parties, Alice and Bob—the protocol is aborted. This is due to the excessive noise in the input state, which makes secret key distillation impossible. While the utilization of the output state after such an abort event has not been previously proposed to our knowledge, it's a natural consideration. Specifically, if the QBER is sufficiently low such that that the conditional min-entropy with respect of Eve is positive  ($\log|A|-H(A|B)_+>0$ in the asymptotic case), one of the parties (here Alice) can distill private randomness from the obtained data \cite{Berta2014,YHW19}. Note that we use $|A|$ for the dimension of system $A$ and $H(A|B)_+=\max\{H(A|B),0\}$.
This implies that, in certain scenarios, honest parties might be able to recover some private randomness, which is used in the key distribution phase.

We now discuss nontrivial examples and our main results.

\textit{Main Examples}. Realization of a particular path in the resource utilization leads naturally to the notion of {\it sequential protocols}. Consider first a traditional protocol ${\mathcal D}_{\mathbf{Res}_1}[S_{in}\rightarrow S_{out}]$ with input system $S_{in}$ and output $S_{out}$, distilling resource $\mathbf{Res}_1$. Based on this protocol, we define its coherent version  ${\mathcal D}[{S_{in}\rightarrow S_{\mathbf{R}_1}{G}_{\mathbf{R}_1}}]$ with the system $S_{\mathbf{R}_1}$ enabling direct use of the resource and a garbage system ${G}_{\mathbf{R}_1}$, which consist of a {\it residual} of it. By residual we mean here (i) all subsystem that would be traced out by $\mathcal D$ protocol,  or (ii) subsystem which is a part of $S_{out}$ that is not directly used when $\mathbf{Res}_1$ is utilized (such as a shielding system of a private key), but is either physically destroyed or kept untouched for security reasons. Accordingly, the exemplary composition of a sequence of two such distillation protocols acting on the input state $\rho_{in}$ we denote as follows:

 \begin{equation}\label{eq:seqdef}
 (I_{S_{\mathbf{R}_1}}\otimes{\mathcal D}^{(2)}[G_{\mathbf{R}_1}\rightarrow S_{\mathbf{R}_2}G_{\mathbf{R}_2}])\circ{\mathcal D}^{(1)}[S_{in}\rightarrow S_{\mathbf{R}_1} G_{\mathbf{R}_1}](\rho_{in})
 \end{equation}
 and call an $\mathbf{Res}_{1}\rightarrow\mathbf{Res}_2$ {\it sequential distillation protocol} (with natural generalization to more than 2 resources). Specifically we study ${\bf Key}\rightarrow{\bf PRand}$ sequential distillation protocol, meaning the ${\bf Key}$ $\rightarrow$ ${\bf Private\,Randomness}$ sequential distillation one.
\\ 
{\it Main example 1 (Two-way distillation protocol)}: There are many practical scenarios that require probabilistic protocols, as illustrated in {\it Toy example 1.} For instance, in a high-noise regime, QKD protocols often involve two-way public communication, and their outcomes are typically probabilistic. This also applies to two-way entanglement distillation protocols.  Specifically, all input noisy states $\rho_{AB}^{\otimes n}$ are divided into small subsets, such as pairs or triples. After an operation on each subset (for example, a pair), if the measurement of the second state in the pair is correlated, the first state is kept, and its key or entanglement property is improved. However, if anti-correlations are observed, the parties discard the first state of the pair. According to the philosophy presented here, such discarded outputs should be utilized for the distillation of other resources. For example, in this case, they could be used as a source of private randomness or shared public randomness. We exemplify this idea in detail via Gottesman-Lo protocol~\cite{GL03}.

{\it Description of the Gottesman-Lo protocol}:
 It is known, that there exist states which do not permit one-way key distillation. E.g. in the case of BB84 protocol it is known, that above 14.6$\%$ of quantum bit error rate (QBER), no protocol with only one-way communication can yield non-zero key secure against quantum adversary~\cite{GL03}. However, one can get positive key rate up to $18.9\%$ of QBER using two-way public communication. The Gottesman-Lo protocol can be used to distill key out of such states. It works as follows. It amounts to alternation of two steps:
the step B (bit error correction) and the step P (phase error correction) until the state reaches positive coherent information and can be distilled efficiently by one-way protocol such as Devetak-Winter one. We assume all the measurement outcomes to be classical, possibly public, and inherently beyond the paradigm of the distillation of private randomness.  We omit the classical outcomes of the involved post-selection and quantify the quantum ones. Without loss of generality we may assume that the input to the protocol are $n$ copies of some quantum state $\rho_{AB}$. We now describe it in greater detail.

Let $Z=|0\rangle\langle0|-|1\rangle\langle1| $ and $X=|0\rangle\langle1|+|1\rangle\langle 0|$. The step B amounts to (i) paring up states into $n/2$ pairs $\rho_{AB}\otimes \rho_{A'B'}$ (ii) performing bilateral XORs ($\mathrm{CNOT}_{AA'}$ and $\mathrm{CNOT}_{BB'}$) (iii) measuring systems $A'B'$ in computational basis and comparing the results. If the results are same, $\rho_{AB}$ (half of the pair $\rho_{AB}\otimes \rho_{A'B'}$) is kept for the step P, else it gets discarded. We discard the classical results from $A'B'$ after comparison is done. Notice that the step B is similar to the advantage distillation done in \cite{BBPSSW}.
In the step P, (i) parties $A$ and $B$ form trios of states ($\rho_{AB}\otimes \rho_{A'B'}\otimes \rho_{A''B''}$)  (ii)  measure observables $X_A\otimes X_{A'}$, $X_{A'}\otimes X_{A''}$, $X_B\otimes X_{B'}$, $X_{B'}\otimes X_{B''}$. It is done using Hadamard operator on all qubits, both parties applying two bilateral XORs ($\mathrm{CNOT}_{A A'}\otimes\mathrm{CNOT}_{A A''}$ and $\mathrm{CNOT}_{BB'}\otimes\mathrm{CNOT}_{BB''}$), measuring $Z_{A'}\otimes Z_{B'}$ and $Z_{A''}\otimes Z_{B''}$ and applying Hadamard operator again on the first system.
(iii) if they disagree on both measurements, they apply $Z_A\otimes Z_B$, else nothing. 

{\it Gottesmann-Lo {\bf Key}$\rightarrow$ {\bf PRand} sequential protocol}:
We now describe the proposed modification of the above protocol to extract private randomness from the residual systems after the key distillation. The idea is simple: collect all quantum states that were to be discarded in the Gottesman-Lo (GL) protocol and distill private randomness from them using \textit{private randomness distillation} protocol~\cite{YHW19}. We call such a modified protocol as Gottesmann-Lo $\mathbf{Key}\rightarrow\mathbf{PRand}$ {\it sequential distillation protocol}. 
As the first main result we obtain
the following,
\begin{result}\label{res:1}
    The private randomness rate after $r^{th}$ execution of step B in the GL protocol is given by,
\begin{align}\label{eq:gorate}
\mathrm{Rate}_{\mathrm{Key}_{\mathrm{GL}}\rightarrow \mathrm{PRand}}(r)=~~~~~~~~~~~~~~~~~~~~~~~~~~~~~~~~~~~~~~~~~~~~~~~~~~~~~~~~~~~&\nonumber \\
\begin{cases} 
      \frac{1}{2} p_{fail}(1)R^1_A, & r=1 \\
 \frac{1}{2} \left(p_{fail}(1)R^1_A +\sum_{k=2}^{r}\left(\prod_{l=1}^{k-1}\frac{1-p_{fail}(l)}{6} \right) p_{fail}(k)R^k_A\right), & r>1
\end{cases}
\end{align}
where  $p_{fail}(k)$ denotes the probability of discarding the systems which are not subjected to measurement at the $k^{th}~(1\leq k\leq r)$ step B, and $R^k_A$ denotes their asymptotic private randomness rate. And for an isotropic state, $\mathrm{Rate}_{\mathrm{Key}_{\mathrm{GL}}\rightarrow \mathrm{PRand}}$ reaches close to 0.114 where a positive key is guaranteed via GL protocol.  
\end{result}

\begin{figure}[!ht]
      \centering     \includegraphics[width=\linewidth]{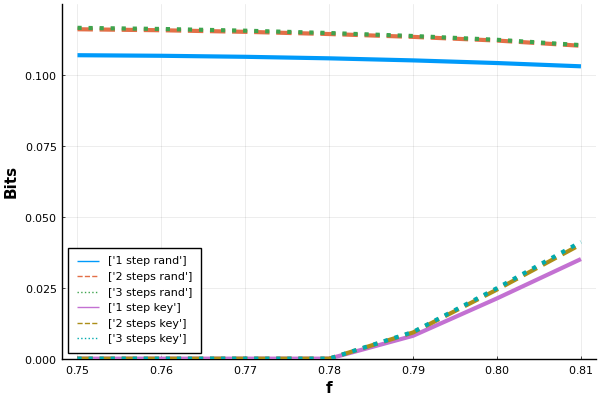}
    \caption{
    Extracted private randomness rate and lower bound on the key rate versus the maximally entangled state fraction, $f$, in the isotropic state using  Gottesman-Lo (GL) {\bf Key}$\rightarrow$ {\bf PRand} {\it sequential distillation protocol}. The `$r$ step rand' and `$r$ step key' corresponds to the $\mathrm{Rate}_{\mathrm{Key}_{\mathrm{GL}}\rightarrow \mathrm{PRand}}(r)$ and the lower bound on $\mathrm{Key_{GL}}(r)$, respectively. } 
 \label{fig:lorate}
\end{figure}

{\it Derivation of the rates of GL {\bf Key}${\rightarrow}${\bf PRand}}:
We first write the recursive formula for the rate of private randomness achieved after some steps of the alternation of the steps B and P. Let the protocol runs till the $r^{th}$ step B. Notice that at each step B half of the systems are subjected to measurement and are always discarded. Similarly, $\frac{2}{3}$ of the systems are subjected to measurement in each step P and thus are always discarded. 
Note that  $\frac{n}{2}p_{fail}(1)$ is the total number of discarded systems at the first step B, therefore, the asymptotic formula for the  randomness rate after the first step B will be $\frac{1}{n}\frac{n}{2}p_{fail}(1)R^1_A$. Then, after step  P, the total number of remaining copies are $\frac{n}{2}(1-p_{fail}(1))\frac{1}{3}$. At another step B the total number of copies being discarded is $\frac{n}{2}(1-p_{fail}(1))\frac{1}{3.2}p_{fail}(2)$ and thus the  randomness rate is $\frac{1}{2}(1-p_{fail}(1))\frac{1}{3.2}p_{fail}(2)R^2_A$. And thus the total randomness rate after second step B will be $\frac{1}{2}(p_{fail}(1)R^1_A+\frac{1-p_{fail}(1)}{6}p_{fail}(2)R^2_A)$. Similarly, we can see that after the $r^{th}$ step B the total randomness rate reads as given in Eq. \eqref{eq:gorate}. Similarly, we can write the form of the distilled key after $r^{th}$ step B, denoted by $\mathrm{Key_{GL}}(r)$. Let $K^k$ denote the asymptotic key rate from systems kept at the $k^{th}$ step B. Then,
\begin{align}\label{eq:gokeyrate}
   \mathrm{Key_{GL}}(r)=~~~~~~~~~~~~~~~~~~~~~~~~~~~~~~~~~~~~~~~~~~~~~~~~~~~~~~~~~~~~~~~~~~~~~~~~~~ \nonumber \\
   \begin{cases} \frac{1}{2}  \left(1-p_{fail}(1)\right)K^1, &r=1 \\ 
   \frac{1}{2} \left( \left(1-p_{fail}(1)\right)K^1+\sum_{k=2}^{r}\left(\prod_{l=1}^{k-1}\frac{1-p_{fail}(l)}{6} \right) K^k\right), &r>1
   \end{cases}
\end{align}
\noindent
In the Supplementary Material B, we explicitly provide the form of $\mathrm{Rate}_{\mathrm{Key}_{\mathrm{GL}}\rightarrow \mathrm{PRand}}(r)$ and  $\mathrm{Key_{GL}}(r)$ after $r^{\text{th}}$ step B for a general mixture of Bell states in the Gottesmann-Lo protocol. Next, consider the isotropic state, $\rho_{f}$, in form: 
\begin{equation}
\rho_f:=f |\psi_+\>\<\psi_+| + \frac{(1-f)}{3}\left(|\psi_-\>\<\psi_-|+ |\phi_+\>\<\phi_+|+|\phi_-\>\<\phi_-| \right),
\label{eq:iso_state}
\end{equation} 
where $|\psi_\pm\> =\frac{1}{\sqrt{2}}(|00\>\pm |11\>)$ and $|\phi_\pm\> = \frac{1}{\sqrt{2}}(|01\> \pm |01\>)$.
In the fig. \ref{fig:lorate}, we plot the key rate and private randomness rate of the isotropic state with a maximally entangled state fraction, $f$, for the modified Gottesmann-Lo $\mathbf{Key}\rightarrow\mathbf{PRand}$ {\it sequential distillation protocol}.  
As expected, $\mathrm{Rate}_{\mathrm{Key}_{\mathrm{GL}}\rightarrow \mathrm{PRand}}(r)$ increases with $r$, however, after the second step B the improvement fades away due to the contribution of the factor $\frac{1}{6^{k-1}}$ from $k^{th}$ steps B and P. 
We also observe that the Gottesmann-Lo (GL) protocol becomes more relevant for the case when $0.78\lessapprox f\lessapprox 0.81$, because in this range GL protocol guarantees a non-zero key from $\rho_{f}$, whereas, one-way (e.g. Devetak-Winter) key distillation protocol guarantees a non-zero key rate for $f\gtrapprox 0.81$ (since it is possible to get a keyrate of $-H(A|B)$ from an arbitrary state $\rho_{AB}$~\cite{DW}). Thus one-way (e.g. Devetak-Winter) key distillation protocol can be directly implemented for $f\gtrapprox 0.81$. We see that the usage of residual states allows for $\approx 0.1137$ of private randomness rate for $f=0.79$ where a non-zero key distillation is guaranteed from the GL protocol.  
We also investigate the BBPSSW entanglement distillation protocol~\cite{BBPSSW} which is same to GL protocol with the absence of steps P. The  randomness rate from the residual states of the BBPSSW protocol is plotted in Fig. \ref{fig:bbpsswrate}. Notice, since the GL protocol and the BBPSSW protocol start with the same step B, the output randomness is the same after the first step B. However, the absence of step P in the BBPSSW protocol leads to a better private randomness rate from second step B onward. Indeed, for $f=0.79$, we observe that $\approx 0.1148$ of private randomness rate can be extracted from residual of the BBPSSW protocol.

\begin{figure}
	\centering
	\includegraphics[trim={2cm 2 3 2},width=7.5cm]{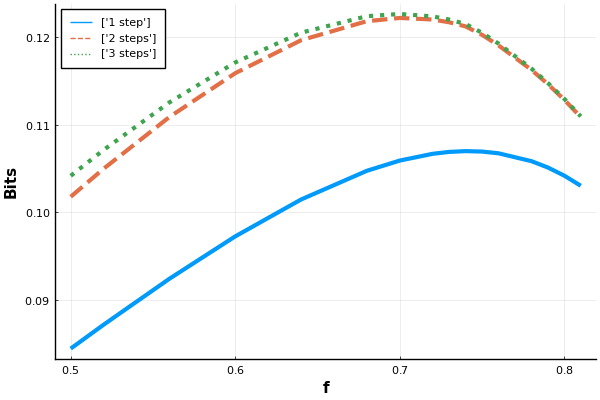}
	\caption{
    Randomness rate from the residual states of the BBPSSW entanglement distillation protocol  versus the maximally entangled state fraction, obtained from the isotropic state given in Eq.(\ref{eq:iso_state}). Here `$r$ steps' indicate the curve for the randomness rate after $r^{th}$ step B.}
 \label{fig:bbpsswrate}
\end{figure}

{\it Main example 2  (Private key followed private randomness generation)}: The Devetak–Winter protocol \cite{DW} for quantum key distribution has, for the last two decades, been one of the most commonly used protocols for distilling a secret key from quantum states assisted by one-way communication (see, however,~\cite{RenesRenner} for an alternative one-shot approach). We will consider its coherent version, i.e., one in which all local operations are implemented by (i) adding auxiliary systems (ii) performing unitary transformation (iii) put aside any system instead of tracing it out (for a formal description of such protocols, see \cite{HHHO09}).
Such a protocol explicitly generates not only the subsystem used to obtain the key but also a residual subsystem, known as the shield, which should be kept away from the eavesdropper \cite{pptkey}. This shielding system can therefore be regarded as the residual part of the protocol. We will show that, for a wide class of states, one can extract private randomness directly from such obtained system. Specifically, consider pure tripartite states $\ket{\psi}_{ABE}$ which, when measured in a full von Neumann basis ${\ket{x}}$ on Alice’s (sender’s) subsystem, induce a pure ensemble ${\ket{\phi^x}_{BE}}$. We show that, given a sufficient number of copies of such a state, one can execute the coherent Devetak–Winter (DW) protocol to first extract a secret key and subsequently distill private randomness from the corresponding shield. To establish this result, we introduce a sequential DW $\mathbf{Key}\rightarrow\mathbf{Prand}$ distillation protocol (see Supplementary Material D).

\begin{result}\label{res:2}
    Consider a bipartite state $\rho_{AB}$ such that its purification to the eavesdropper's system $E$ reads
\begin{equation}
|\psi\>_{ABE}:=\sum_{x} \sqrt{P(x)}\ket{x}_A\otimes \ket{\phi^x}_{BE}.
\label{eq:input_state}
\end{equation}
Let $\sigma_{XBE}$ be the state $\ket{\psi}_{ABE}$ measured in computational basis $\{\ket{x}\}$ on system $A$.
For a sufficiently large number of copies $n$ of $\rho_{AB}$,
one can distill $r_{\mathrm{key}}\approx n [I(X;B)-I(X;E)]_{{\sigma}}$ bits of secure key by a (one-way from A to B) coherent Devetak-Winter protocol. Moreover from the residual
of the latter key distillation protocol, one can obtain $r_{\mathrm{rand}}\approx n I(X;E)_{\sigma}$ bits of randomness private from Eve. 
The latter protocol
destroys at the same time
the quantum  subsystem of X, which is shielding the key system from quantum adversary.
\end{result}

\begin{figure}
	\centering
	\includegraphics[trim={2cm 2 3 2},width=7cm]{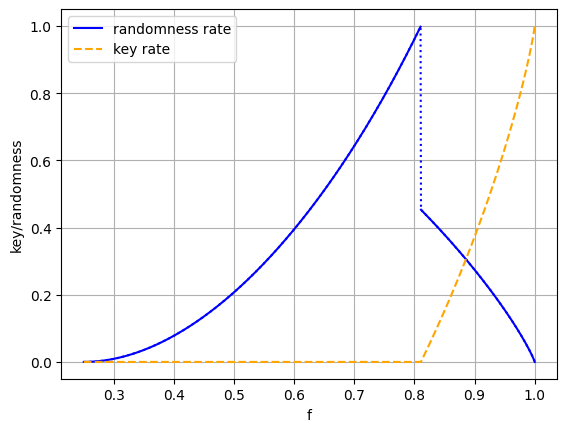}
	\caption{The figure depicts the relation between the  rate of private randomness and secret key distilled from the isotropic state (depending on parameter $f \in [0,1]$ as in Eq. (\ref{eq:iso_state})). The blue, solid curve, representing private randomness, consists of two distinct parts with a jump at $f \approx 0.8125$, where the rate of key distillation (orange, dashed curve) $-H(A|B)$  becomes positive. When $-H(A|B)$ is positive, randomness is distilled at rate $I(X;E)_{\sigma^f_{XE}}$ as shown in Result 2 and else (if $H(A|B) > 0$) it is distilled at rate $R_{AB}= \log|A| - H(A|B)_+ = \log|A| - H(A|B)$, which in our case of isotropic state is $2-H(AB)$.
    }
 \label{fig:iso_gain}
\end{figure}

The detailed proof of the Result~\ref{res:2} can be found
in the Supplementary Material D. We now provide the idea used in the sequential DW $\mathbf{Key}\rightarrow\mathbf{Prand}$ distillation protocol that generates the pair of rates $(r_{key},r_{rand})$ mentioned above. \\
{\it Idea of the proof of Result~\ref{res:2}}. We first focus on the coherent version of DW key-distillation protocol. This protocol results in the splitting of the system $A(B)$ into subsystems $\mathrm{Key}_A(\mathrm{Key}_B)$ and $\mathrm{Shield}_A(\mathrm{Shield}_B)$ such that $\mathrm{Key}_A(\mathrm{Key}_B)$ will hold key for subsystem $A(B)$ and $\mathrm{Shield}_A(\mathrm{Shield}_B)$  acts as a shield for subsystem $A(B)$. After Alice's measurement in the basis of $\{|x\rangle\}$,  she communicates a public classical message $\mathrm{Mes}_C(\mathrm{M}_C)$ to Bob. If $|\psi\rangle_{ABE}$ is good enough, this communication helps to correlate the subsystem $A$ with subsystem $B$ such that if subsystems $\mathrm{Key}_A$ and $\mathrm{Key}_B$ are measured in a computational basis they they would hold a secret key. The schematic for this protocol is depicted in Fig. 5a. In the case of the DW entanglement distillation (hashing) protocol (depicted in Fig. 5b),
after the communication $\mathrm{M}_C$,   Alice additionally measures $\mathrm{Shield}_A$ in a Fourier transform of its computational basis and communicates classical message, $\mathrm{Mes}_D(\mathrm{M}_D)$, to Bob publicly. The message $\mathrm{M}_D$ allows Bob to decouple subsystems $\mathrm{Shield}_B E$ from $\mathrm{Key}_B$. Therefore, given the protocol succeeds, the subsystems $\mathrm{Key}_A$ and $\mathrm{Key}_B$ are pure, entangled (in a nearly maximally entangled state) and hence decoupled from Eve. We thus indicate $\mathrm{Key}_A$ and $\mathrm{Key}_B$ by $\mathrm{ENT}_A$ and $\mathrm{ENT}_B$, respectively in Fig. 5b. 
Now in the sequential DW $\mathbf{Key}\rightarrow\mathbf{Prand}$ distillation protocol (see Fig. 5c), the idea is that Alice {\it does not} communicate $\mathrm{M}_D$ to Bob and keep it to herself. This turns out to be her randomness private from 
$E$. Now, Bob cannot
decouple $\mathrm{Shield}_B$ from $\mathrm{Key}_B$ locally (due to lack of information $\mathrm{M}_D$) but systems $\mathrm{Key}_A\mathrm{Key}_B$ after
measuring in the computational basis (as in the coherent DW key distillation protocol) holds secure key which, as we show, is also independent of $M_D$ and thus independent of the private randomness.  The rate of key is the same as in the coherent DW key distillation protocol $r_{key}=[I(X;B)-I(X;E)]_{\sigma}$ and the rate of private randomness is equal to the size of the shielding system of Alice, which reads $r_{rand}=I(X;E)_{\sigma}$. Note that the rate of key distillation $r_{key}=-H(A|B)_\rho$~\cite{DW}. It can be easily obtained from the fact that the state $\sigma$ is obtained after measuring $|\psi\rangle_{ABE}$ (purification of $\rho_{AB}$) in the Schmidtt basis $|x\>_A$, thus $H(BE)_\sigma=H(BE)_\psi$.
Therefore, the rate of key distillation,
$I(X;B)-I(X;E)]_{\sigma}=H(B)_\psi-\sum_xP(x)H(B)_{\phi^x}-H(E)_\psi+\sum_xP(x)H(E)_{\phi^x}=-H(A|B)_\psi=-H(A|B)_\rho$, using the facts that $H(B)_{\phi^x}=H(E)_{\phi^x}$ and $H(E)_\psi=H(AB)_\psi$.

\begin{figure}[h!]
    \includegraphics[trim={0cm 2 3 2},width=9.5cm]{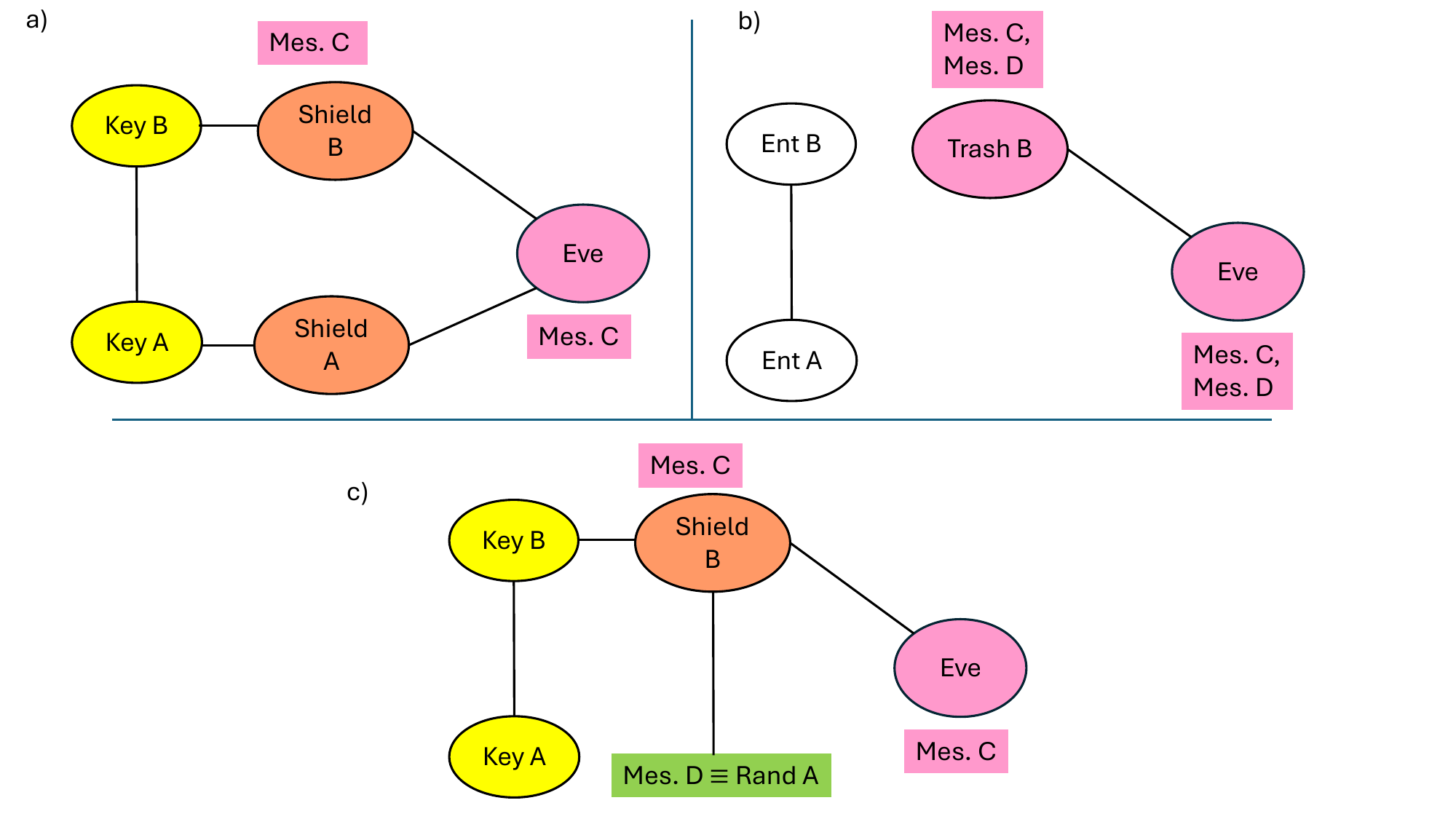}
	\caption{
    Depiction of the output of 3 protocols:  a) coherent Devetak-Winter key distillation protocol outputs a state with key parts and shielding systems as well as public message $\mathrm{M}_C$ which enabled Alice and Bob become fully correlated. b) Devetak-Winter hashing entanglement distillation protocol is the coherent DW protocol which outputs maximal entanglement on systems $\mathrm{Ent ~A}$ and $\mathrm{Ent~B}$ thanks to generating additional public message, $\mathrm{M}_D$, from Alice to Bob. 
    c) Devetak-Winter $\mathbf{Key}\rightarrow\mathbf{Prand}$ sequential protocol distills key on systems $\mathrm{Key}_A$ and $\mathrm{Key}_B$. The kept private message $\mathrm{M}_D$, which coincides with result of the measurement of the system $\mathrm{Shield~A}$ of the coherent DW protocol is shown to be private randomness for Alice, independent from the generated key, and most importantly private from the Eve.  }
\label{fig:DWidea}
    \end{figure}

In Fig. \ref{fig:iso_gain}, we plot the rate of key distillation and the private randomness rate versus the maximally entangled state fraction, $f$, for the isotropic state $\rho_f$, given in Eq. \eqref{eq:iso_state}. The key rate becomes positive for $f\gtrapprox0.8125$. The randomness rate is calculated in two parts. Until $f\approx 0.8125$, the randomness rate is given by $\log|A|-H(A|B)_+$~\cite{YHW19}. However, for positive key rate regime, the DW sequential  protocol dictates the private randomness to be $I(X;E)_{\sigma^f}$, where $\sigma^f_{XBE}$ is the classical quantum state obtained after measuring subsystem $A$ of the state $\rho_f$ in the computational basis $\{|x\rangle\}$.  

{\it Remark 1}. Proving the Result~\ref{res:2}, We show that there exist a  $\mathrm{Key}\rightarrow\mathrm{PRand}$ sequential protocol which gives additional cryptographic benefit. This is because the residual state of the so called privacy amplification step, is a part, that should be kept secret from Eve for security reasons.
Indeed, in language of the information theory it is a {\it shielding} state which protects the key from the quantum eavesdropper (see  \cite{pptkey} for the formally considered shielding system in the case of quantum adversary).
For this reason, so far one has faced a non-zero energy expenditure: either energy of  shielding system is locked from Alice for a long time (ideally forever) which clearly costs energy and storage resources or
it needs to be physically destroyed. Any such erasure has a non-zero energy cost due to famous Landauer's principle \cite{Landauer1961}. As we will show in what follows, the idea
of reuse of the shielding system for private randomness generation,
solves both
the problem of protecting the shield from eavesdropper and (to some extent) the need of private randomness in future QKD protocols.

{\it Remark 2}.  Striking as it is, from the theoretical perspective, the main Result~\ref{res:2} implies that all the correlations measured in terms of the quantum mutual information $I(X;B)_{\sigma^f}$,
can be split into two {\it independent} resources: private key and private randomness.

 We now present a general framework for quantum residual management using sequential resource distillation of higher level.
 
 \textit{Formalism}.--- In general, consider $\mathbf{Res}_i:=(\mathbf{F}_i,\mathbf{O}_i)$ be a distinct resource theory of the resource $\mathrm{Res}_i$ for each $i\in\{1,2,\ldots,r\}$, $r\in\mathbb{N}$, where $\mathbf{F}_i$ is the set of free states in $\mathbf{Res}_i$ and $\mathbf{O}_i$ is the set of free operations in $\mathbf{Res}_i$. We can form several hierarchies among $\mathbf{Res}_i$ by evaluating their free states: whether $\mathbf{F}_{i+1}\subseteq \mathbf{F}_i$ and $\mathbf{O}_{i+1}\subseteq \mathbf{O}_i$ or not? We can build a \textit{residual use graph} (RUG) $\mathbf{G}_{\mathrm{Res}_{i_0}}$ for $\mathrm{Res}_{i_0}$ by considering different resource theories (need not be exhaustive), for which their set of free states and free operations are contained in the set $\mathbf{F}_{i_0}$ and $\mathbf{O}_{i_0}$, respectively\footnote{Note that the RUG is in fact a Hesse diagram because of the set inclusion of free states and free operations gives rise to a partial order in the set of resource theories.}. Each edge of the graph is of the form $\mathbf{Res}_i\to \mathbf{Res}_{j}$, given that $\mathbf{F}_i\supseteq \mathbf{F}_{j}$ and $\mathbf{O}_i\supseteq \mathbf{O}_{j}$. Naturally, these edges constitute a transitive relation between resource theories, i.e. if there is an edge $\mathbf{Res}_i\to \mathbf{Res}_{j}$ and edge $\mathbf{Res}_j\to \mathbf{Res}_{k}$, then there is also an edge $\mathbf{Res}_i\to \mathbf{Res}_{k}$. Thus, paths connecting different resource theories with $\mathbf{Res}_{i_0}$ are not unique and may have different lengths. Nevertheless, we can introduce the notion of {\it level of the vertex} in our graph. For a resource theory $\mathbf{Res}_{j}$, we define it as the length $\ell$ of the longest path connecting $\mathbf{Res}_{i_0}$ and $\mathbf{Res}_{j}$. This allows us to a assign a natural index $\ell$ to the resource theory $\mathbf{Res}_{j}$. Thus, we will denote $\mathbf{Res}_{j}$ as $\mathbf{Res}_{i_0}[n, \ell]$, where $n$ is an arbitrary integer, unique for each resource theory of level $\ell$. The main objects of investigation in our work are paths of RUG. Each path with `$k$' nodes starting from the root $\mathbf{Res}_{i_0}$ is of the form $\mathbf{Res}_{i_0}\to \mathbf{Res}_{i_0}{[n_1,\ell_1]}\to \mathbf{Res}_{i_0}{[n_2,\ell_2]}\to\cdots\to \mathbf{Res}_{i_0}{[n_{k-1},\ell_{k-1}]}$ (see Fig.~\ref{fig:RUG}), given that $\mathbf{F}_{i_0}\supseteq \mathbf{F}_{i_0}{[n_1,\ell_1]}\supseteq \mathbf{F}_{0}{[n_2,\ell_2]}\supseteq\cdots\supseteq \mathbf{F}_{i_0}{[n_{k-1},\ell_{k-1}]}$ 
 and $\mathbf{O}_{i_0}\supseteq \mathbf{O}_{i_0}{[n_1,\ell_1]}\supseteq \mathbf{O}_{0}{[n_2,\ell_2]}\supseteq\cdots\supseteq \mathbf{O}_{i_0}{[n_{k-1},\ell_{k-1}]}$.

 By examining the structure of the RUG in greater depth, we find that it forms a directed acyclic graph (DAG). Indeed, if there was a directed cycle, resource theories representing the nodes would have {\it equal} sets of free states and operations, hence would be equal.
  It is however also instructive to associate with RUG, its undirected corresponding graph denoted here as URUG. The later is obtained from RUG by erasing direction of edges. 
 Interestingly, URUG is a so called {\it chordal graph}~\footnote{A chord is of a cycle is an edge which connects its non-consecutive vertices. A graph is called chordal if every cycle of size larger than $4$ has a chord.
},  which follows directly from the 
 transitivity of the set inclusion relation. 
 In particular, in Fig.~(\ref{fig:RUG}) there should be an edge between $\mathbf{Res}_i$ and $\mathbf{Res}_i[2, 2]$ and many others, but we omitted them to increase readability.
 
 \begin{figure}[htpb]
  \centering
  \begin{tikzpicture}[
    edge from parent/.style={draw, -latex},
    level distance=20mm,     
    sibling distance=25mm,   
    every node/.style={align=center} 
  ]
  \node {\(\mathbf{Res}_{i}\)}
    child { node {\(\mathbf{Res}_{i}{[1,1]}\)} 
        child { node {\(\mathbf{Res}_{i}{[1,2]}\)} }
        child { node {\(\mathbf{Res}_{i}{[2,2]}\)} 
            child { node {\(\mathbf{Res}_{i}{[1,3]}\)} }
        } 
    }
    child { node {\(\mathbf{Res}_{i}{[2,1]}\)} 
      child { node {\(\mathbf{Res}_{i}{[2,2]}\)} }
      child { node {\(\mathbf{Res}_{i}{[3,2]}\)} }
    }
    child { node {\(\mathbf{Res}_{i}{[3,1]}\)} };
  \end{tikzpicture}
  \caption{Pictorial representation of a Residual Use Graph (RUG) for $\mathbf{Res}_i$. 
  Nodes represent resources. Each path from the root to the leaf corresponds to 
  the sequence of resources satisfying the inclusion property: if the resource 
  $\mathbf{Res}_i$ is a parent of $\mathbf{Res}_j$ in the RUG, then the set of 
  free states of $\mathbf{Res}_j$ is contained in the set of the free states 
  of $\mathbf{Res}_i$. To increase readability of the graph, we left only the edges that connect vertices of adjacent levels.}
  \label{fig:RUG}
\end{figure}
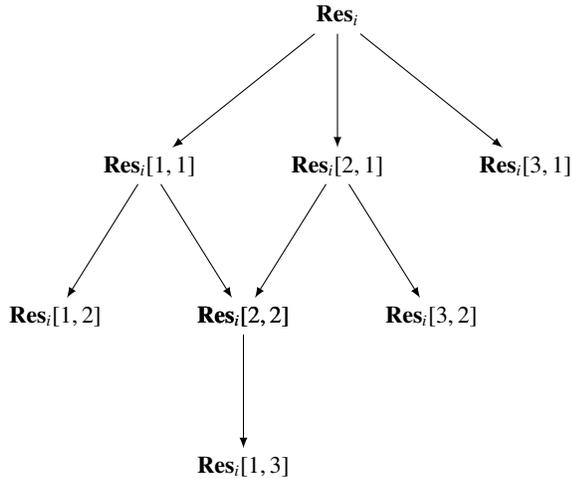
 
 For the task of quantum residual management, we can pick some nodes in a fixed order from a path (say $\mathbf{Res}_{i_0}\to \mathbf{Res}_{i_0}{[n_1,\ell_1]}\to \mathbf{Res}_{i_0}{[n_2,\ell_2]}\to\cdots\to \mathbf{Res}_{i_0}{[n_{k-1},\ell_{k-1}]}$) of a RUG (say $\mathbf{G}_{\mathrm{Res}_{i_0}}$) for $\mathbf{Res}_{i_0}$ to distill different resources in succession from residual obtained from the distillation of the respective primary resources. 
 For the residual management from the resource distillation task of $\mathbf{Res}_{i_0}$ using 
 path $\mathbf{Res}_{i_0}\to \mathbf{Res}_{i_0}{[n_1,\ell_1]}\to \mathbf{Res}_{i_0}{[n_2,\ell_2]}\to\cdots\to \mathbf{Res}_{i_0}{[n_{k-1},\ell_{k-1}]}$, we allow the distillation of resource $\mathrm{Res}_{i_0}{[n_j,\ell_j]}$ only through the allowed free operations in $\mathrm{Res}_{i_0}[n_{j},\ell_{j}]$. This is because the operation employed in the distillation protocol should not allow to increase the resourcefulness of the source quantum states.

 Lastly, the choices of actual realization of a path $\mathrm{Res}_{i_0}$ using 
 path $\mathbf{Res}_{i_0}\to \mathbf{Res}_{i_0}{[n_1,\ell_1]}\to \mathbf{Res}_{i_0}{[n_2,\ell_2]}\to\cdots\to \mathbf{Res}_{i_0}{[n_{k-1},\ell_{k-1}]}$ depend on the interests of the user among several physically feasible choices, driven by multiple factors like state source, market demands, energy cost~\cite{Auffves2022,HWS+25}, technological capabilities, etc.
In the case of {\it Main example} 1 and 2 we have considered 
 $\mathbf{Res}_{1}$ to be
 the resource theory of private key $\mathbf{Res}_1\equiv\mathbf{Key}=(\mathrm{SEP},\mathrm{LOPC})$.
Here SEP is the set of separable (i.e. disentangled) states \cite{HHHH09} and LOPC stands for the set of Local Operations and Public Communication operations \cite{DW,HHHO09,pptkey}
. We note here, that
the set of free states for the resource theory of private key secure against quantum adversary, is not known (see the 24th open problem in \cite{Viennalist} and \cite{HSDW24} in this context). It is however known, that it contains separable states \cite{CLL04}. Further we have considered resource theory of private randomness in a bipartite setting of \cite{YHW19} hence the theory of our secondary resource is $\mathbf{Res}_{1}[1,1]\equiv \mathbf{Prand}=(\mathrm{MMS}_A,\mathrm{LU_A})$, where $\mathrm{LU_A}$ stands for local unitary transformations at A's side, and the set of free states $\mathrm{MMS}_A$ consists of maximally mixed states of arbitrary but finite dimension. 
Note that indeed according to the rules considered, $
\mathbf{Res}_1
\rightarrow\mathbf{Res}_1[1,1] $, i.e., $\mathbf{Key}
\rightarrow
\mathbf{Prand}$, since a maximally mixed state is separable, so $\mathrm{MMS}_A\in \mathrm{SEP}$
 and the local unitaries on both sides are special $\mathrm{LOPC}$ operations.
 
{\it Discussion}.--- We have proposed a paradigm for resource-aware fundamental quantum resource extraction, which paves a way for more efficient quantum information processing. A systematic study of the concept of residual utility graphs for commonly used quantum resources, should be done in the first place. 

We have focused on showing achievable rates in Gottesmann-Lo and Devetak-Winter ${\bf Key}\rightarrow{\bf Prand}$ sequential protocols. The next step would be to make these protocols practically useful. In particular, it remains an interesting open problem whether the DW protocol must be performed coherently for private randomness to be extractable from its shield.

We show that the chosen distillation protocols possess useful residuals from which another resource can be extracted. However, it remains an open question whether optimal protocols (e.g., those for secure key distillation) also exhibit this property. Indeed, the key rate for the DW protocol that we obtain is equal to the distillable entanglement of the input state. It is, however, known that the secure key rate can exceed the entanglement distillation rate \cite{pptkey}. It is also important to investigate whether the DW-based sequential protocol corresponds to the so-called virtual quantum state merging protocol~\cite{Horodecki2005}, as observed in \cite{YHW19}, since we also demonstrate that private randomness can be drawn from information that would otherwise be communicated publicly but is instead used as a private resource.

Furthermore, in the spirit of the Quantum Energy Initiative, it will be useful to study the same protocols from an energy-consumption perspective in light of recent results \cite{FellousAsiani2023,Raja_com,HWS+25,BGD25b}. It should be verified whether the energy spent on drawing resources from already generated quantum data (residual of some protocol) is not greater than the energy spent on generating them from scratch.

\textit{Note added}.--- During the final preparation of this manuscript for submission to arXiv today, we noticed that a work~\cite{ZFL+25} with a similar motivation also appeared on arXiv on the same day.

\begin{acknowledgments}
C.S. and L.S. acknowledge
support of the National Science Centre, Poland, under grant Opus 25 no. UMO-2023/49/B/ST2/02468.
KH acknowledges the Fulbright Program and Cornell ECE for hospitality during the Fulbright scholarship at the School of Electrical and Computer Engineering of Cornell University.
KH acknowledges partial support by the Foundation for Polish Science (IRAP project, ICTQT, contract no.\ MAB/2018/5, co-financed by EU within Smart Growth Operational Programme). The `International Centre for Theory of Quantum Technologies' project (contract no.\ MAB/2018/5) is carried out within the International Research Agendas Programme of the Foundation for Polish Science co-financed by the European Union from the funds of the Smart Growth Operational Programme, axis IV: Increasing the research potential (Measure 4.3).
K.H. acknowledges partial support
of the National Science Centre, Poland, under grant Opus 25,
2023/49/B/ST2/02468.
S.D. acknowledges support from the SERB (now ANRF), Department of Science and Technology, Government of India, under grant no. SRG/2023/000217 and the Ministry of Electronics and Information Technology (MeitY), Government of India, under grant no. 4(3)/2024-ITEA, and the National Science Centre, Poland, under grant Opus 25, 2023/49/B/ST2/02468. S.D. thanks University of Gda\'nsk, Gda\'nsk, Poland for the hospitality during his visit. The authors thank Dhruv Baheti for spotting an error in the early version of the manuscript.
\end{acknowledgments}

\section*{Supplemental Material}
\subsection{Preliminaries on private randomness}\label{supp0}

For a given state $\rho_{AB}$, the following private randomness rate regions are achievable (and tight in settings 1, 2 and 3)~\cite{YHW19}:\\
1) For no communication and no noise, $R_A \leq \log |A|-H(A|B)_+,$ $R_B \leq \log |B|-H(B|A)_+,$ and $R_A +R_B \leq R_G,$ where $[t]_+ = \max\{0, t\}$;\\
2) for free noise but no communication, $R_A \leq \log |A|-
H(A|B)$, $R_B \leq \log |B| - H(B|A)$, and $R_A + R_B \leq R_G$;\\
3) for free noise and free communication, $R_A \leq R_G,$ $R_B \leq R_G,$ and $R_A + R_B \leq R_G$;
4) for free communication but no noise, $R_A \leq
\log |AB| - \max\{H(B), H(AB)\}$, $R_B \leq \log |AB| -
\max\{H(A), H(AB)\}$, and $R_A + R_B \leq R_G$.

\subsection{Private randomness in Gottesmann-Lo key distillation protocol}\label{supp1}
Consider the initial state of system $A$ and $B$ be 
\begin{equation}
    \rho^0_{AB}=w_0 \psi_+ +x_0 \psi_-+y_0\phi_++z_0\phi_-,
\end{equation}
where $w_0,x_0,y_0,z_0\geq 0$ and $w_0+x_0+y_0+z_0=1$. As already mentioned in the main text, this protocol has alternating step B and step P. We also assumed that we stop after the $r^{th}$ step B, where $r$ is some natural number. 
After the first step B, the state of the discarded system is given by
\begin{equation}    \rho^{fail(1)}_{AB}=\frac{(w_0y_0+x_0z_0)(\psi_++\phi_+)+(w_0z_0+x_0y_0)(\psi_-+\phi_-)}{p_{fail}(1)}
\end{equation}
where $p_{fail}(1)=2(w_0+x_0)(y_0+z_0)$ is the probability of discarding the state in the first step B.  
For an arbitrary state, $\rho_{AB}$, let the maximum private randomness extractable from system $A$, using no communication and affected by no noise, be denoted by $R_A(\rho_{AB})$,  then from \cite{YHW19}, we have
\begin{equation}
    R_A(\rho_{AB})=\log|A|-H(A|B)_+
\end{equation}
where $H(A|B)_+=\max\left\{H(A|B),0 \right\}$ and $H(A|B)=H(\rho_{AB})-H(\rho_B)$. Let $R^k_A$ denote the asymptotic private randomness rate after $k^{th}$ step B, where $1\leq k\leq r$. So $R^1_A=R_A(\rho^{fail(1)}_{AB})$.
Next, it is easy to see that the accepted systems after the first B step are in the state given by
\begin{align}
   \rho^{B1}_{AB}&= a_1\psi_+ 
   + b_1\psi_-  + c_1\phi_+ + d_1\phi_-,
\end{align}
where $a_1=\frac{w_0^2+x^2_0}{(w_0+x_0)^2+(y_0+z_0)^2}$, $b_1=\frac{2w_0x_0}{(w_0+x_0)^2+(y_0+z_0)^2}$, $c_1=\frac{y_0^2+z_0^2}{(w_0+x_0)^2+(y_0+z_0)^2}$, and $d_1=\frac{2y_0z_0}{(w_0+x_0)^2+(y_0+z_0)^2}$.
There exists a protocol such that the asymptotic key rate extracted from an arbitrary state $\rho_{AB}$ is given by the coherent information, $-H(A|B)_{\rho_{AB}}$~\cite{DW}. Let us denote the asymptotic key rate generated using such a protocol from the accepted systems after $k^{th}$ step B by $K^k$, where $1\leq k\leq r$. Then, $K^1=-H(A|B)_{\rho^{B1}_{AB}}$.
Now after applying the first  step P, the state of the system $A$ and $B$ can be in any of the following two states,
\begin{align}
   \rho^{P1}_{AB}= w_1 \psi_+ + x_1 \psi_-+ y_1\phi_+ + z_1\phi_-
\end{align}
\begin{align}
   \rho^{'P1}_{AB}= x_1 \psi_+ + w_1 \psi_- + z_1\phi_+ + y_1\phi_-,
\end{align}
with probability $q_1:=1-(a_1^2+c_1^2)(b_1+d_1)-(b_1^2+d_1^2)(a_1+c_1)-2a_1b_1(c_1+d_1)-2c_1d_1(a_1+b_1)$ and $1-q_1$, respectively, and where 
\begin{align}
w_1&=a_1^3+2a_1^2b_1+a_1b^2_1+3a_1c_1^2 \nonumber \\
&+2b_1c_1^2 +a_1d_1^2+4a_1c_1d_1+2b_1c_1d_1, \nonumber \\
x_1&=b_1^3+a_1^2b_1+2a_1b^2_1+3b_1d_1^2\nonumber \\
&+b_1c_1^2+2a_1d_1^2+4b_1c_1d_1+2a_1c_1d_1, \nonumber \\
y_1&=c_1^3+2c_1^2d_1+c_1d^2_1+3a_1^2c_1 \nonumber \\
&+2a_1^2d_1+b_1^2c_1+4a_1b_1c_1+2a_1b_1d_1, \nonumber \\
z_1&=d_1^3+c^2_1d_1+2c_1d_1^2+3b_1^2d_1 \nonumber \\
&+2b^2_1c_1+a_1^2d_1+4a_1b_1d_1+2a_1b_1c_1. 
\end{align}
However, any of these states, when passed through next step B, will give the state of the discarded system as
 \begin{equation}
    \rho_{AB}^{fail(2)}=\frac{(w_1y_1+x_1z_1)(\psi_++\phi_+)+(w_1z_1+x_1y_1)(\psi_-+\phi_-)}{p_{fail}(2)}
\end{equation}   
 with the probability $p_{fail}(2)=2(w_1+x_1)(y_1+z_1)$. And $R^2_A=R(\rho_{AB}^{fail(2)})$. 
Let us define $\theta_1=(w_1+x_1)(y_1+z_1)$ and $\theta_0=(w_0+x_0)(y_0+z_0)$. We can see that
\begin{equation}
   \theta_1= \frac{\theta_0^2[3((1-2\theta_0)^2-2\theta^2_0)^2+4\theta^4_0]}{(1-2\theta^2_0)^6}. 
\end{equation}
Now, the state of the accepted systems after the second B step is given by
\begin{equation}
    \rho^{B2}_{AB}=a_2\psi_+ 
   + b_2\psi_-  + c_2\phi_+ + d_2\phi_-,
\end{equation}
 where $a_2=\frac{w_1^2+x^2_1}{(w_1+x_1)^2+(y_1+z_1)^2}$, $b_2=\frac{2w_1x_1}{(w_1+x_1)^2+(y_1+z_1)^2}$, $c_2=\frac{y_1^2+z_1^2}{(w_1+x_1)^2+(y_1+z_1)^2}$, and $d_2=\frac{2y_1z_1}{(w_1+x_1)^2+(y_1+z_1)^2}$. Notice that $\rho^{B2}_{AB}$ is same whether that state after first P step is $\rho^{P1}_{AB}$ or $\rho^{'P1}_{AB}$. So $K^2=-H(A|B)_{\rho^{B1}_{AB}}$.

Then, using mathematical induction, the state of the system after the $k-1^{\text{th}}$ P step will be of the form $w_{k-1} \psi_+ + x_{k-1}\psi_- + y_{k-1}\phi_+ + z_{k-1}\phi_-$.  And after the $k^{\text{th}}$ B step the state of the accepted system will be $a_{k} \psi_+ + b_{k}\psi_- + c_{k}\phi_+ + d_{k}\phi_-$, where $a_k=\frac{w_{k-1}^2+x^2_{k-1}}{N_{k-1}}$, $b_2=\frac{2w_{k-1}x_{k-1}}{N_{k-1}}$, $c_2=\frac{y_{k-1}^2+z_{k-1}^2}{N_{k-1}}$, $d_2=\frac{2y_{k-1}z_{k-1}}{N_{k-1}}$, and $N_{k-1}=(w_{k-1}+x_{k-1})^2+(y_{k-1}+z_{k-1})^2$.  
Then after the
$k^{\text{th}}$ P step it will become $w_{k} \psi_+ + x_{k}\psi_- + y_{k}\phi_+ + z_{k}\phi_-$ or $x_{k} \psi_+ + w_{k}\psi_- + z_{k}\phi_+ + y_{k}\phi_-$, where 
\begin{align}
w_k&=a_k^3+2a_k^2b_k+a_kb^2_k+3a_kc_k^2 \nonumber \\
&+2b_kc_k^2 +a_kd_k^2+4a_kc_kd_k+2b_kc_kd_k, \nonumber \\
x_k&=b_k^3+a_k^2b_k+2a_kb^2_k+3b_kd_k^2\nonumber \\
&+b_kc_k^2+2a_kd_k^2+4b_kc_kd_k+2a_kc_kd_k, \nonumber \\
y_k&=c_k^3+2c_k^2d_k+c_kd^2_k+3a_k^2c_k \nonumber \\
&+2a_k^2d_k+b_k^2c_k+4a_kb_kc_k+2a_kb_kd_k, \nonumber \\
z_k&=d_k^3+c^2_kd_k+2c_kd_k^2+3b_k^2d_k \nonumber \\
&+2b^2_kc_k+a_k^2d_k+4a_kb_kd_k+2a_kb_kc_k. 
\end{align}

\noindent
Thus, the state of the discarded system after $k+1^{th}$ B step is given by
 \begin{equation}
    \rho_{AB}^{fail(k+1)}=\frac{(w_ky_k+x_kz_k)(\psi_++\phi_+)+(w_kz_k+x_ky_k)(\psi_-+\phi_-)}{p_{fail}(k+1)}
\end{equation}   
 with the probability $p_{fail}(k+1)=2(w_k+x_k)(y_k+z_k):=2\theta_k$. The sequence $\theta_k$ follows the recursion relation:
\begin{equation}
    \theta_k=\frac{\theta_{k-1}^2[3((1-2\theta_{k-1})^2-2\theta^2_{k-1})^2+4\theta^4_{k-1}]}{(1-2\theta^2_{k-1})^6},
\end{equation} 
Now the rate of private randomness after the $r^{th}$ step B can be obtained using Eq. \eqref{eq:gorate}. That is,
\begin{align}\label{pawn} 
\mathrm{Rate}_{\mathrm{Key}_{\mathrm{GL}}\rightarrow \mathrm{PRand}}(r)=~~~~~~~~~~~~~~~~~~~~~~~~~~~~~~~~~~~~\nonumber \\
\begin{cases} 
      \theta_0 R^1_A, & r=1 \\
     \theta_0 R^1_A+\sum_{k=1}^{r-1}\left(\prod_{l=0}^{k-1}\frac{1-2\theta_l}{6} \right) \theta_kR^{k+1}_A, & r>1. 
   \end{cases}
\end{align}

Similarly, we have $K^k$ for $1\leq k\leq r$, the key rate after $r^{th}$ step B can be obtained using  Eq.~\eqref{eq:gokeyrate}.

\subsection{Definitions}\label{supp5}

For the sake of completeness, here we present definitions on $\epsilon$-secrecy, $\epsilon$-evenness, and $\epsilon$-good codes which can also be found in ~\cite{DW}.
Consider the purification, given in Eq. \eqref{eq:input_state}, of the input state $\rho$. Let $Q$ be an $n$-type which is typical, i.e., $||P-Q||_1\leq \delta$ for some $\delta>0$ (see \cite{CoverThomas} for typical sequences). Now consider random variables $u^{(lms)}$ ($l=1,\ldots,L, m=1,\ldots,M, s=1,\ldots,S$), independent identically distributed with respect to the uniform distribution on the type class $\mathcal{T}^n_Q$. Let $\sigma_Q:=\frac{1}{|\mathcal{T}^n_Q|}\sum_{x^n\in \mathcal{T}^n_Q} \rho^E_{x^n}$.
Now we are set to define $\epsilon$-evenness, $\epsilon$-secrecy, and $\epsilon$-good codes ( $\mathcal{C}_l=(u^{(lms)})_{m,s}$) for random variables $u^{(lms)}$. \\
\textbf{$\epsilon$-evenness:} For all $x^n \in \mathcal{T}^n_Q$,
\begin{equation}
    (1-\epsilon)\frac{LMS}{|\mathcal{T}^n_Q|}\leq \sum_{lms}1_{u^{(lms)}(x^n)}\leq(1+\epsilon)\frac{LMS}{|\mathcal{T}^n_Q|},
\end{equation}
where $1_{u^{(lms)}}$ is the indicator function on $\mathcal{T}^n_Q$.\\
\textbf{$\epsilon$-secrecy:} For all $l,m$, the average of Eve's reduced state over the random variable $s$ is close to $\sigma(Q)$, i.e.,
\begin{equation}
    \Bigg|\Bigg| \frac{1}{S}\sum_s\rho^E_{u^{(lms)}}-\sigma(Q)\Bigg|\Bigg|_1\leq \epsilon.
\end{equation}
\textbf{Codes $\mathcal{C}_l$ are $\epsilon$-good:}
The code $\mathcal{C}_l$ is $\epsilon$-good if there exists a POVM $(D^{(l)}_{ms})_{m,s}$ such that
\begin{equation}
    \frac{1}{MS}\sum_{ms}\tr(\rho^B_{u^{(lms)}}D^{(l)}_{ms})\geq 1-\epsilon.
\end{equation}

\subsection{Devetak-Winter $\mathrm{Key}\rightarrow \mathrm{PRand}$ sequential protocol}\label{supp0}

There exists a one-way (from A to B) LOCC  $(n,\epsilon)$-protocol which distills entanglement from $n$ copies of $|\psi_{ABE}\>$ given in Eq. \eqref{eq:input_state}, 
at rate $r_{ent}:=-S(A|B)_\rho$~\cite{DW}. 
In order to show the existence of such a $(n,\epsilon)$-protocol, it is first shown that for each typical type $Q$ (i.e., $||P-Q||_1\leq \delta$ for $\delta>0$) and for large enough $n$, there exists a collection of codewords, $u^{(lms)}$, $l=1,\ldots,L$, $m=1,\ldots,M$, and $s=1,\ldots,S$ from typical sequence, $\mathcal{T}^n_Q$, which satisfy $\epsilon$-evenness, $\epsilon$-secrecy, and a fraction of at least $1-2\epsilon$ of the codes $C_l=(u^{(lms)})_{m,s}$ are $\epsilon$-good (see supplementary material \ref{supp5} for the definitions). The following choices of $L$, $M$, and $S$ are sufficient to show such an existence~\cite{DW}:
\begin{eqnarray}
    L&=&e^{n(H(Q)-I(Q;B)+2\delta)},\nonumber \\
    M&=&e^{n(I(Q;B)-I(Q;E)-3\delta)}, \nonumber \\
    S&=&e^{n(I(Q;E)+2\delta)},
\end{eqnarray}
where $H(Q)$ denotes the  entropy of type $Q$ and $I(Q;B)$ or $I(Q;E)$ are the mutual information of the reduced classical quantum state after measuring type $Q$ on the state $|\psi_{ABE}^{\otimes n}\rangle$.

We can now proceed to describe the protocol which consists of the following steps:
\begin{enumerate}
\item Alice measures  non-destructively the type Q of her subsystem. She aborts the protocol, if the type is atypical. 
\item Performs a quantum operation which outputs classical label $l$ which is the number of a code $C_l$ and decodes $|x\>$ locally into $|ms\>_{M_AS_A}$. In addition, the operation outputs an additional result $\emptyset$ with probability $\leq \epsilon$. The output state when the outcomes $l$ occur is given by $\frac{1}{\sqrt{MS}}\sum_{ms}|ms\>_{M_AS_A}\otimes |\phi^{lms}\>_{BE}$.

\item Alice tells $l$ via public channel to Bob.  Bob applies decoding operation of the system since the code is $\epsilon$-good. Introducing an auxiliary system, $B'$, he applies a unitary on system $BB'$ such that the joint state takes the form $|\vartheta^l\>_{M_AS_AM_BS_BB'E}$, where 
\begin{eqnarray}\label{mil}
  |\vartheta^l&&\>_{M_AS_AM_BS_BB'E}=\frac{1}{\sqrt{MS}}\sum_{ms}|ms\rangle_{M_AS_A}\otimes  \nonumber \\
&&\left(\sqrt{1-e_{ms}}|ms\>_{M_BS_B}|\varphi^{lms}_{OK}\>_{B'E}+\sqrt{e_{ms}}|\varphi^{lms}_{bad}\>_{BB'E}\right) 
\end{eqnarray}
where the $B$ system is split in the subsystems $M_B$ and $S_B$, 
$e_{ms}$ is the probability of misidentifying $ms$, and $|\varphi^{lms}_{bad}\>$ is orthogonal to $|ms\>|\varphi^{lms}_{OK}\>$. 
\item Alice measures register $S_A$ 
in the Fourier basis, $\{|t\>\}$,  where $|t\rangle=\frac{1}{\sqrt{S}}\sum_s \exp(i2\pi ts/S) |s\rangle$ for $t=1,2,\dots S$, and tells the outcome $t$ to Bob.
\label{eq:Fourier}
\item Bob corrects the phase error induced by A by performing a unitary $U:=U_{t}$ on $S_B$ system.
\label{eq:B}
\item B decouples system $M_B$ from $B'E$ by means of unitary $V=\sum_{m}\ket{m}\bra{m}\otimes V^{(m)}_{S_BB'}$.
\label{eq:amplification}
\end{enumerate}
Inspired by results of \cite{YHW19}, we show, how to easily turn the above protocol into a waste-managing one by distilling key plus private randomness instead of distilling entanglement. We call such protocol as Devetak-Winter $\mathrm{Key}\rightarrow \mathrm{PRand}$ sequential protocol, since we extract key and private randomness instead of distilling entanglement as in the \cite{DW}.
\begin{itemize}
\item The first three steps of the above protocol are identical.
\item In the step \ref{eq:Fourier} A does not tell the result $\hat{t}$ of the Fourier measurement. This result is her private randomness. 
\item The step \ref{eq:B} is omitted as B does not know how to correct the errors induced by A's measurement of $S_A$ system in previous step.
\item The step \ref{eq:amplification} is omitted.
\item In the final step A and B also measure their respective systems, $M_A$ and $M_B$, in the computational basis.
\end{itemize}

\begin{lemma}\label{lemma1}
Consider an arbitrary state $\rho_{ABCDE}$. Let us define $\rho^{cccq}:=\sum_{a,b,c}|abc\rangle\langle abc| \langle abc| \tr_{D}\rho_{ABCDE}|abc\rangle$. Consider unitaries, $U_{CD}=\sum_{c}|c\rangle\langle c|\otimes U^c$ and $V_{BD}=\sum_b |b\rangle\langle b|\otimes V^b$. Define state, $\rho^{ps}:=\sum_{a,b,c}|abc\rangle\langle abc| \langle abc| \tr_{D}\left(V_{BD}U_{CD}\rho_{ABCDE}U^\dagger_{CD}V^\dagger_{BD}\right)|abc\rangle$. Then
   $ \rho^{ps}=\rho^{cccq}$.
\end{lemma}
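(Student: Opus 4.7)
The strategy is to exploit two elementary facts: (i) a controlled unitary whose control register is measured in the computational basis commutes with that measurement, in the sense that sandwiching by the computational-basis projectors on the control simply extracts the corresponding conditional unitary on the target; and (ii) a unitary acting only on a subsystem that is subsequently traced out has no effect on the reduced state. In the lemma, $U_{CD}$ and $V_{BD}$ are both of the controlled form with controls $C,B$ in the computational basis, and the joint target $D$ is traced out at the end, so one expects $\rho^{ps}=\rho^{cccq}$.

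To make this formal, I would first combine the two controlled unitaries into
\begin{equation}
W_{BCD} := V_{BD}\,U_{CD} = \sum_{b,c} |b\rangle\langle b|_B \otimes |c\rangle\langle c|_C \otimes W^{bc}_D, \qquad W^{bc} := V^b U^c,
\end{equation}
which is block-diagonal in the computational basis of $BC$. A direct check then gives the factorization $\langle abc|_{ABC}\,W_{BCD} = \langle abc|_{ABC} \otimes W^{bc}_D$, together with its adjoint on the right. Substituting into $W_{BCD}\,\rho_{ABCDE}\,W^\dagger_{BCD}$ and sandwiching by $\langle abc|_{ABC}$ on both sides yields
\begin{equation}
\langle abc|\bigl(W\,\rho\,W^\dagger\bigr)|abc\rangle \;=\; \bigl(W^{bc}_D \otimes I_E\bigr)\,\langle abc|\rho|abc\rangle\,\bigl((W^{bc}_D)^\dagger \otimes I_E\bigr),
\end{equation}
an operator on $DE$ that is merely a unitary conjugation on $D$ of the corresponding sandwich of $\rho$ itself.

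To finish, I would use that $\tr_D$ commutes with the sandwich by $\langle abc|_{ABC}$ (disjoint subsystems) and that unitary conjugation on $D$ is invisible under $\tr_D$ by cyclicity of the trace, i.e.\ $\tr_D[(W^{bc}_D \otimes I_E)\,X_{DE}\,((W^{bc}_D)^\dagger \otimes I_E)] = \tr_D X_{DE}$ for every operator $X_{DE}$. Combining these two observations gives $\langle abc|\,\tr_D(W\rho W^\dagger)\,|abc\rangle = \langle abc|\,\tr_D(\rho)\,|abc\rangle$, and summing this equality against $|abc\rangle\langle abc|_{ABC}$ immediately yields $\rho^{ps}=\rho^{cccq}$. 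There is no genuine obstacle; the proof is essentially bookkeeping, and the only subtle point is to notice that although the target-side operators $V^b$ and $U^c$ need not commute as operators on $D$, the controlled unitaries $U_{CD}$ and $V_{BD}$ have disjoint controls $B$ and $C$, so their product still admits the clean block-diagonal form on $BC$ that the factorization step requires.
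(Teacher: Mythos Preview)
Your proposal is correct and follows essentially the same route as the paper's proof: expand the controlled unitaries so that the computational-basis sandwich on $ABC$ extracts the conditional operators $V^bU^c$ on $D$, then invoke cyclicity of $\tr_D$ to eliminate them. The paper carries this out as a terse chain of equalities (expanding $V_{BD}U_{CD}$ with independent summation indices and collapsing via $\langle abc|b'c'\rangle=\delta_{bb'}\delta_{cc'}$), while you package the same computation by first forming $W_{BCD}=\sum_{b,c}|b\rangle\langle b|\otimes|c\rangle\langle c|\otimes V^bU^c$ and stating the factorization $\langle abc|W_{BCD}=\langle abc|\otimes W^{bc}_D$ explicitly; the content is identical.
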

\begin{proof}
\begin{eqnarray}
\rho^{ps}&=&\sum_{a,b,c}\sum_{b',c',b'',c''} |abc\rangle\langle abc|\langle abc|b'c'\rangle\langle b'c'| \nonumber  \\
&&\tr_{D}\left(V^{b'}U^{c'}\rho_{ABCDE}U^{c''\dagger}V^{b''\dagger}\right)|b''c''\rangle\langle b''c''|  abc\rangle \nonumber \\
&=&\sum_{a,b,c}|abc\rangle\langle abc| \langle abc| \tr_{D}\left(V^{b}U^{c}\rho^{ABCDE}U^{c\dagger}V^{b\dagger}\right)|abc\rangle \nonumber \\
&=&\sum_{a,b,c}|abc\rangle\langle abc| \langle abc| \tr_{D}\rho^{ABCDE}|abc\rangle \nonumber \\
&=&\rho^{cccq}. \nonumber
\end{eqnarray}
\end{proof}
Lemma \ref{lemma1} proves that the application of control unitaries of the type $U_{CD}$ and $V_{BD}$ and then performing measurements on $ABC$ in the control basis an arbitrary state, followed by tracing out $D$ will not change the final state $\rho^{cccq}$.
We are ready to provide the proof that the modified Devetak-Winter $\mathrm{Key}\rightarrow\mathrm{PRand}$ sequential protocol provides the honest parties A and B with both private key and private randomness. 

\begin{proposition}
   The final state at the end of our one-way $(n,\epsilon )$  private key and private randomness sequential distillation protocol is such that it is $1-12\sqrt{\epsilon}$ close in fidelity to the ideal state $\frac{1}{MS}\sum_{m,t}|mmt\rangle\langle mmt|_{M_AM_BS_A}\otimes \sigma_E$, where $|t\rangle:=\frac{1}{\sqrt{S}}\sum_s \exp(i2\pi ts/S) |s\rangle$ for $t=1,2,\dots S$.
\end{proposition}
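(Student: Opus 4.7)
The plan is to piggyback on the correctness analysis of the coherent Devetak--Winter (DW) entanglement distillation protocol from Ref.~\cite{DW} and use Lemma~\ref{lemma1} to transfer its bound to our modified sequential protocol. The key observation is that our protocol differs from coherent DW only in that Bob's two controlled unitaries in steps~5 and~6 are omitted, and in place of the distilled entanglement readout one measures $M_A, M_B$ in the computational basis and $S_A$ in the Fourier basis.

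First, I would invoke the known DW guarantee: for the chosen $\epsilon$-good, $\epsilon$-secret, $\epsilon$-even code ensemble, the output of the full coherent DW protocol on $M_AS_AM_BS_BE$ is within fidelity $1-12\sqrt{\epsilon}$ of the ideal decoupled maximally entangled state $|\Psi\rangle\langle\Psi|_{M_AS_AM_BS_B}\otimes\sigma_E$, where $|\Psi\rangle=\frac{1}{\sqrt{MS}}\sum_{ms}|ms\rangle_{M_AS_A}|ms\rangle_{M_BS_B}$. This bound is assembled from (i) a gentle-measurement argument bounding the weight of the $|\varphi^{lms}_{bad}\rangle$ branch of $|\vartheta\rangle$ via $\epsilon$-goodness and the $\leq \epsilon$ probability of the $\emptyset$ outcome in step~2, (ii) $\epsilon$-secrecy to decouple Eve's register after averaging over the dummy variable $s$, and (iii) $\epsilon$-evenness for uniformity, all passed through Fuchs--van de Graaf.

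Second, I would apply Lemma~\ref{lemma1} with the identification $A\mapsto M_A$, $B\mapsto M_B$, $C\mapsto S_A$, $D\mapsto S_BB'$, treating Eve's $E$ as the untouched $E$ of the lemma. Bob's step~5 unitary $U_t$, classically controlled by Alice's Fourier outcome, is equivalent to a coherent controlled unitary $\sum_t |t\rangle\langle t|_{S_A}\otimes U^t_{S_BB'}$ applied before Alice's measurement; after a local Fourier transform on $S_A$ converting the Fourier measurement into a computational one, this is of the form $U_{CD}$ required by the lemma. Bob's step~6 unitary $V=\sum_m|m\rangle\langle m|_{M_B}\otimes V^{(m)}_{S_BB'}$ is already of the form $V_{BD}$. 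Lemma~\ref{lemma1} then implies that the reduced classical-classical-classical-quantum state on $M_AM_BS_AE$ is identical whether these controlled unitaries are performed (as in coherent DW) or omitted (as in our modified protocol).

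Consequently, the output of the modified protocol on $M_AM_BS_AE$ coincides with the output of the full coherent DW protocol followed by computational measurement of $M_A,M_B$ and Fourier measurement of $S_A$. These measurements are CPTP maps, hence cannot decrease fidelity. Applied to the ideal DW output $|\Psi\rangle\langle\Psi|\otimes\sigma_E$, they yield precisely $\frac{1}{MS}\sum_{m,t}|mmt\rangle\langle mmt|_{M_AM_BS_A}\otimes\sigma_E$: the computational measurement of both halves of the maximally entangled register forces $M_A=M_B=m$ uniformly, while the Fourier measurement on $S_A$ returns a uniform $t$ whose distribution factorises from $E$. The $1-12\sqrt{\epsilon}$ state-closeness bound thus propagates verbatim. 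The main technical obstacle is arithmetic rather than conceptual: tracking each $\epsilon$-contribution through the chain of inequalities to certify the explicit constant $12$, and carefully grouping the ancilla $B'$ with $S_B$ into the lemma's register $D$ so that the trace-out operation in Lemma~\ref{lemma1} faithfully models what the honest parties actually discard.
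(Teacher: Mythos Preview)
Your proposal is correct and follows essentially the same route as the paper: both use Lemma~\ref{lemma1} (with the same register identification $A\mapsto M_A$, $B\mapsto M_B$, $C\mapsto S_A$, $D\mapsto S_BB'$) to show that omitting Bob's controlled unitaries in steps~5 and~6 leaves the measured $cccq$ state on $M_AM_BS_AE$ unchanged, then invoke monotonicity of fidelity under the final measurements and partial trace. The only cosmetic difference is that the paper rebuilds the $1-12\sqrt{\epsilon}$ constant in-line---first obtaining $F\geq 1-3\sqrt{\epsilon}$ from $\epsilon$-goodness, then $F\geq 1-\epsilon$ from $\epsilon$-secrecy, and combining via the triangle inequality for the Bures angle (not Fuchs--van~de~Graaf); $\epsilon$-evenness does not actually enter the fidelity bound, so you can drop item~(iii) from your list.
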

\begin{proof}
The joint state shared by A, B, and E after the third step of our protocol is $|\vartheta^l\rangle_{M_AM_BS_AS_BB'E}$ given in Eq. \eqref{mil}. Here we assume, that system $E$ stores not only initial quantum side information, but also a classical message $|l\>$, i.e. the ket $|l\>$ is a part of a state $|\varphi^{lms}_{OK}\>$. Since the code $C_l$ are $\epsilon$-good, this implies 
\begin{equation}\label{cricket}
    F\left(|\vartheta^l\rangle, \sqrt{\frac{1}{MS}}\sum_{ms}|mm\rangle_{M_AM_B}|ss\rangle_{S_AS_B}|\phi^{lms}_{OK}\rangle_{B'E}\right)\geq 1-3\sqrt{\epsilon}, 
\end{equation}
for each $l$.
For the proof of the above, see supplementary material \ref{app:3}.
where $F(x,y)$ denotes the fidelity between states $x$ and $y$. 
Now lemma \ref{lemma1}, shows that there exists unitary of the form $U_{S_AS_B}=\sum_{t,s}|t\rangle\langle t|\otimes e^{i2\pi ts/S} |s\rangle\langle s|$, where $\{|t\rangle:=\frac{1}{\sqrt{S}}\sum_s \exp(i2\pi ts/S) |s\rangle\}_{t=1}^S$ is a Fourier transform of the basis $\{|s\rangle\}_{s=1}^S$, application of which will not change the final state in our protocol.
Let $|\Xi^l\>:=U_{S_AS_B}\frac{1}{\sqrt{MS}}\sum_{ms}|mm\rangle_{M_AM_B}|ss\rangle_{S_AS_B}|\phi^{lms}_{OK}\rangle_{B'E}=\frac{1}{\sqrt{MS}}\sum_{m,t}|mmt\>_{M_AM_BS_A}|\phi^{lm}\>_{S_BB'E}$ where $|\phi^{lm}\>_{S_BB'E}=|\phi^{lm}\>_{B'E}:=\frac{1}{\sqrt{S}}\sum_s|s\>_{S_B}|\phi^{lms}_{OK}\>_{B'E}$ (absorbing register $S_B$ in $B'$). 
Since unitary operation will keep the fidelity of the states same, therefore for all $l$, we have
\begin{eqnarray}\label{eq:pehla}
F\left(U_{S_AB'}|\vartheta^l\rangle, |\Xi^l\>\right)\geq 1-3\sqrt{\epsilon}.  
\end{eqnarray}
Consider the reduced state of  $\frac{1}{\sqrt{MS}}\sum_{m,t}|mmt\>_{M_AM_BS_A}|\phi^{lm}\>_{B'E}$ on $E$ register be $\sigma^{lm}$. That is, $\sigma^{lm}=\frac{1}{S}\sum_s \tr_{B'}|\phi^{lms}\>\<\phi^{lms}|$. By $\epsilon$-secrecy, the reduced state of $|\Xi^l\>$ in the Eve register must be at a trace distance at most $\epsilon$ from a state $\sigma_E$, independent of indices $l$ and $m$, i.e. $\tr|\sigma_{lm}-\sigma_E|\leq \epsilon$ for each $l$ and $m$. Using the inequality $1-\sqrt{F(\sigma_{lm},\sigma_E)}\leq \frac{1}{2}\tr|\sigma_{lm}-\sigma_E|$, we reach $F(\sigma_{lm},\sigma_E)\geq 1-\epsilon$. 
Notice that in our actual protocol $U_{S_AB'}$ is never applied; however, application of $U_{S_AB'}$ does not change Eve's reduced state and thus the $\epsilon$-secrecy argument can still be applied.
Let the purification of $\sigma_E$ be $|\zeta\>_{B'E}$. Then there must exist unitaries $V^{lm}$ on $B'$ such that 
\begin{eqnarray}
    F(V^{lm}\otimes I_E|\phi^{lm}\>,|\zeta\>)\geq 1-\epsilon.
\end{eqnarray}
Let $V_{M_BB'}:=\sum_{m} |m\>\<m|\otimes V^{lm}$. And $V^{lm}$ can always be chosen such that
\begin{eqnarray}\label{eq:dusra}
    F\left(V_{M_BB'} |\Xi^l\>, \frac{1}{\sqrt{MS}}\sum_{m,t}|mmt\>|\zeta\>\right)\geq 1-\epsilon.
\end{eqnarray}
See supplementary material \ref{supp:chowtha} that combining \eqref{eq:pehla} and \eqref{eq:dusra}, we have
\begin{eqnarray}\label{eq:haire}
    F\left(V_{M_BB'}U_{S_AB'}|\vartheta^l\>,\frac{1}{\sqrt{MS}}\sum_{m,t}|mmt\>|\zeta\>\right)\geq 1-12\sqrt{\epsilon}.
\end{eqnarray}
Let us define a state produced after measuring $M_A$ in the basis $\{|m\>\}$, $M_B$ in the basis $\{|m\rangle\}$ and $S_A$ in the basis $\{|t\rangle\}$ and tracing out $B'$ on the state $V_{M_BB'}U_{S_AB'}|\vartheta^l\>$ as,
\begin{eqnarray}
    &&\rho^{ps}_{\vartheta^l}:=\frac{1}{MS}\sum_{m,t}|mmt\rangle\langle mmt|_{M_AM_BS_A} \nonumber \\ 
    _{M_AM_BS_A}&&\langle mmt| \tr_{B'}\left(V_{M_BB'}U_{S_AB'}\rho_{\vartheta^l} U^\dagger_{S_AB'}V^\dagger_{M_BB'}\right)|mmt\rangle_{M_AM_BS_A} \nonumber,
\end{eqnarray}
where $\rho_{\vartheta^l}=|\vartheta^l\>\<\vartheta^l|$.
Such measurements and tracing out  $B'$ are fidelity non-decreasing operation between two states and therefore from \eqref{eq:haire}, we get
\begin{equation}
    F\left(\rho^{ps}_{\vartheta^l},\frac{1}{MS}\sum_{m,t}|mmt\>\<mmt|_{M_AM_BS_A}\otimes\sigma_E\right)\geq 1-12\sqrt{\epsilon}.
\end{equation}
Now, the final state for given $l$, after DW $\mathrm{Key}\rightarrow \mathrm{PRand}$ sequential protocol is a state, which is obtained by first measuring  $S_A$ in the basis $\{|t\rangle\}$,  
then measuring $M_A$ in the basis $\{|m\}$, $M_B$ in the basis $\{|m\rangle\}$ and tracing out $B'$ on the state $\vartheta^l$ be $\rho^{cccq}_{\vartheta^l}$. 
That is,
\begin{eqnarray}
    \rho^{cccq}_{\vartheta^l}=\frac{1}{MS}\sum_{m,t}|mmt\rangle\langle mmt|_{M_AM_BS_A} \nonumber \\_{M_AM_BS_A}\langle mmt| \tr_{B'}\left(V_{M_BB'}\rho_{\vartheta^l} V_{M_BB'}^\dagger\right)|mmt\rangle_{M_AM_BS_A} \nonumber.
\end{eqnarray}
From Lemma \ref{lemma1}, we know that $\rho^{cccq}_{\vartheta^l}=\rho^{ps}_{\vartheta^l}$. 
Therefore,
\begin{eqnarray}\label{chingari}
F\left(\rho^{cccq}_{\vartheta^l},\frac{1}{MS}\sum_{m,t}|mmt\>\<mmt|_{M_AM_BS_A}\otimes\sigma_E\right)\geq 1-12\sqrt{\epsilon}. \nonumber\\
\end{eqnarray}
The final state after averaging over all $l$ is then given as $\frac{1}{L}\sum_l \rho^{cccq}_{\vartheta^l}$.
Using Eq. \eqref{chingari} and the joint concavity of fidelity, we have
\begin{equation}
    F\left(\frac{1}{L}\sum_l\rho^{cccq}_{\vartheta^l},\frac{1}{MS}\sum_{m,t}|mmt\>\<mmt|_{M_AM_BS_A}\otimes\sigma_E\right)\geq 1-12\sqrt{\epsilon}.
\end{equation}
Therefore the final state after the DW $\mathrm{Key}\rightarrow \mathrm{PRand}$ sequential protocol is close to the ideal state with key in the register $M_A$ and $M_B$ and private randomness in the register $S_A$.  
\end{proof}

\subsection{Proof of inequality \eqref{cricket}}\label{app:3}
The sum of probabilities of extracting $|ms\>_{M_BS_B}$ by Bob from the state $|\vartheta^l\>_{M_AS_AM_BS_BB'E}$  must be higher or equal to $1-\epsilon$ for the code $C_l$ to be $\epsilon$-good which implies
\begin{eqnarray}\label{eq:asaan}
  &&  \sum_{ms}\frac{1}{MS}(1-e_{ms})\geq 1-\epsilon \nonumber \\
\implies &&\sum_{ms}\frac{e_{ms}}{MS}\leq \epsilon \nonumber \\
\implies &&\sum_{ms:e_{ms}>\sqrt{\epsilon}}\frac{e_{ms}}{MS}\leq \epsilon \nonumber \\
    \implies &&\sum_{ms:e_{ms}>\sqrt{\epsilon}}\frac{\sqrt{\epsilon}}{MS}\leq \epsilon \nonumber \\
    \implies &&\sum_{ms:e_{ms}>\sqrt{\epsilon}}\frac{1}{MS}\leq \sqrt{\epsilon}
\end{eqnarray}
That is, at most a fraction of $\sqrt{\epsilon}$ of $e_{ms}$ are greater than the $\sqrt{\epsilon}$. 
Now from \eqref{cricket},
\begin{eqnarray}
    &&F\left(|\vartheta^l\rangle, \sqrt{\frac{1}{MS}}\sum_{ms}|mm\rangle_{M_AM_B}|ss\rangle_{S_AS_B}|\phi^{lms}_{OK}\rangle_{B'E}\right) \nonumber \\&=&\left(\frac{1}{MS}\sum_{ms}\sqrt{1-e_{ms}}\right)^2 \nonumber \\
    &=& \left(\frac{1}{MS}\sum_{ms:e_{ms}>\sqrt{\epsilon}}\sqrt{1-e_{ms}}+\frac{1}{MS}\sum_{ms:e_{ms}\leq\sqrt{\epsilon}}\sqrt{1-e_{ms}}\right)^2 \nonumber\\
    &\geq& \left(\frac{1}{MS}\sum_{ms:e_{ms}\leq\sqrt{\epsilon}}\sqrt{1-e_{ms}}\right)^2\nonumber \\
    &\geq& (1-\sqrt{\epsilon})^3 \geq 1-3\sqrt{\epsilon},
\end{eqnarray}
where \eqref{eq:asaan} is used for the second inequality above.

\subsection{Proof of \eqref{eq:haire}}\label{supp:chowtha}
\begin{lemma}
    Let $\rho, \sigma, \tau$ be quantum states. Then for each $\delta \in (0, 1)$, if $F(\rho, \tau) \geq 1-\delta$ and $F(\sigma, \tau) \geq 1-\delta$, then $F(\rho, \sigma) \geq 1-4\delta$.
\end{lemma}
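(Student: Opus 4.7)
The plan is to prove this via the Bures angle $A(\rho,\sigma) := \arccos\sqrt{F(\rho,\sigma)}$, which—under the squared-fidelity convention used in the paper (cf. the pure-state identity $F(|\vartheta\rangle,|\cdot\rangle) = |\langle\vartheta|\cdot\rangle|^2$ employed in Appendix~\ref{app:3})—is a genuine metric on the set of density operators and therefore satisfies the triangle inequality. Equivalently, one could work with the purified (``sine'') distance $P(\rho,\sigma) := \sqrt{1-F(\rho,\sigma)}$, which is also a metric; I prefer the Bures angle formulation because it leads most cleanly to the constant $4$ in the bound.

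First, I would translate the two hypotheses $F(\rho,\tau)\geq 1-\delta$ and $F(\sigma,\tau)\geq 1-\delta$ into bounds on angles. By monotonicity of $\arccos$, these say $A(\rho,\tau)\leq \arccos\sqrt{1-\delta}$ and $A(\sigma,\tau)\leq \arccos\sqrt{1-\delta}$. The triangle inequality for $A$ then yields $A(\rho,\sigma)\leq 2\arccos\sqrt{1-\delta}$, which upon taking cosines of both sides gives $\sqrt{F(\rho,\sigma)}\geq \cos\!\bigl(2\arccos\sqrt{1-\delta}\bigr)$.

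Next, I would invoke the double-angle identity $\cos(2\theta) = 2\cos^2\theta - 1$ with $\cos\theta = \sqrt{1-\delta}$, obtaining $\cos\!\bigl(2\arccos\sqrt{1-\delta}\bigr) = 2(1-\delta)-1 = 1-2\delta$. Hence $\sqrt{F(\rho,\sigma)}\geq 1-2\delta$. In the regime $\delta\leq 1/2$ both sides are non-negative, so squaring is legitimate and gives $F(\rho,\sigma)\geq (1-2\delta)^2 = 1-4\delta+4\delta^2\geq 1-4\delta$; for $\delta>1/2$ the claimed bound $F(\rho,\sigma)\geq 1-4\delta$ is trivial since fidelity is non-negative and $1-4\delta<0$.

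The only delicate point—more a bookkeeping remark than a genuine obstacle—is citing/justifying the metric property of the Bures angle under the paper's squared-fidelity convention, i.e.\ that $\arccos\sqrt{F(\cdot,\cdot)}$ satisfies the triangle inequality. This is standard and follows from Uhlmann's theorem by lifting to purifications and using the ordinary angle between unit vectors. It is precisely what drives the linear exponent on $\delta$ in the conclusion: a more naive route using Fuchs--van de Graaf together with the trace-distance triangle inequality would instead yield only $F(\rho,\sigma)\geq 1-4\sqrt{\delta}$, which is insufficient for the preceding application to inequality \eqref{eq:haire}.
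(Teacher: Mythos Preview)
Your proof is correct and follows essentially the same route as the paper: both arguments invoke the triangle inequality for the Bures angle together with the double-angle identity $\cos 2\theta = 2\cos^2\theta - 1$ to arrive at the bound $1-4\delta$. The only cosmetic difference is that the paper writes the angle as $\arccos F$ (implicitly in the Nielsen--Chuang root-fidelity convention) and lands on the intermediate expression $1-4\delta+2\delta^2$, whereas you work with $\arccos\sqrt{F}$ under the squared-fidelity convention and obtain $(1-2\delta)^2 = 1-4\delta+4\delta^2$; your care in matching the convention actually used elsewhere in the supplement is well placed.
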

\begin{proof}
    An \emph{angle} between quantum states $\rho, \sigma$ is defined as $\arccos{F(\rho, \sigma)}$. It is known (see Chapter 9.2 \cite{nielsen2010quantum}) that it satisfies the \emph{triangle inequality}, i.e. for each $\rho, \sigma, \tau$
    \begin{equation}
        \arccos{F(\rho, \sigma)} \leq \arccos{F(\rho, \tau)} + \arccos{F(\sigma, \tau)}.
    \end{equation}
    Since $F(\rho, \tau), F(\sigma, \tau) \geq 1-\delta$ and $\arccos(\cdot)$ is a decreasing function, we can write
    \begin{eqnarray}
        &\arccos{F(\rho, \sigma)} \leq \arccos{F(\rho, \tau)} + \arccos{F(\sigma, \tau)} \leq\\
        &\qquad \qquad \arccos{(1-\delta)} + \arccos{(1-\delta)} = 2\arccos{(1-\delta)}\nonumber,
    \end{eqnarray}
    which implies that
    \begin{eqnarray}
        & F(\rho, \sigma) \geq \cos(2\arccos{(1-\delta)}) = 2[\cos\arccos(1-\delta)]^2 - 1 = \nonumber\\
        & 2 - 4\delta+2\delta^2 - 1 \geq 1-4\delta.
    \end{eqnarray}
\end{proof}
Now, let the states $\rho,\sigma,\tau$ be such that:
\begin{eqnarray}
    F(\rho,\tau)&&\geq 1-3\sqrt{\epsilon}, \nonumber \\
    F(\sigma,\tau)&&\geq 1-\epsilon. \nonumber
\end{eqnarray}
Then, since $1-\epsilon \geq 1-3\sqrt{\epsilon}$, we can take $\delta = 3\sqrt{\varepsilon}$ and use above lemma to show that
\begin{equation}
    F(\rho,\sigma) \geq 1 - 4\cdot 3\sqrt{\epsilon} = 1-12\sqrt{\epsilon}.
\end{equation}

\subsection{Private randomness after private key distillation for isotropic states}\label{sec:private-randomness-isotropic-case}
Here, we estimate the rate of DW
${\mathbf Key}\rightarrow\mathbf{ Prand}$ sequential protocol for 
the two qubit isotropic state.
We parametrize it here as follows:
$\rho_{\mathrm{iso},2}=p\ket{\psi_+}\bra{\psi_+}+(1-p)\frac{\mathbb{I}}{4}$. We denote the rate of private randomness by $\mathrm{Rate}_{\mathrm{Key}_{\mathrm{DW}}\rightarrow \mathrm{PRand}}$.

\begin{lemma}
The rate of private randomness (given positive rate of key) in the case of isotropic state of local dimension $d=2$ reads
    \begin{align}
        &\mathrm{Rate}_{\mathrm{Key}_{\mathrm{DW}}\rightarrow \mathrm{PRand}} = I(X:E)_{\rho_{iso}} =\\
        &-\frac{1+3p}4 \log_2 \left(\frac{1+3p}4\right) -\frac{3(1-p)}4 \log_2 \left(\frac{1-p}4\right) - h\left(\frac{1+p}2\right)\nonumber
    \end{align}
\end{lemma}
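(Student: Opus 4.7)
The plan is to apply Result 2 directly: since we are in the regime where the key rate is positive, the private randomness rate equals $I(X;E)_{\sigma}$, where $\sigma_{XBE}$ is obtained by measuring Alice's system in the computational basis on a purification $\ket{\psi}_{ABE}$ of the isotropic state. Thus the whole task reduces to constructing a convenient purification, writing down the post-measurement classical-quantum-quantum state, and evaluating $H(E)-H(E|X)$ in closed form.

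First I would exploit the Bell-diagonal representation of the isotropic state,
\begin{equation}
\rho_{\mathrm{iso},2}=\tfrac{1+3p}{4}\ketbra{\psi_+}+\tfrac{1-p}{4}\bigl(\ketbra{\psi_-}+\ketbra{\phi_+}+\ketbra{\phi_-}\bigr),
\end{equation}
which follows from $\mathbb{I}/4=\tfrac14\sum_{\mathrm{Bell}}\ketbra{\cdot}$. A natural purification is
\begin{equation}
\ket{\psi}_{ABE}=\sqrt{\tfrac{1+3p}{4}}\ket{\psi_+}_{AB}\ket{0}_E+\sqrt{\tfrac{1-p}{4}}\bigl(\ket{\psi_-}_{AB}\ket{1}_E+\ket{\phi_+}_{AB}\ket{2}_E+\ket{\phi_-}_{AB}\ket{3}_E\bigr),
\end{equation}
with $\{\ket{j}_E\}_{j=0}^3$ orthonormal on Eve's side. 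A measurement of $A$ in $\{\ket{0},\ket{1}\}$ then splits each Bell vector into a product of a computational basis state on $B$ and one of the Eve flags, producing the classical-quantum-quantum state $\sigma_{XBE}$.

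Next I would compute the two ingredients of $I(X;E)_{\sigma}=H(E)_{\sigma}-H(E|X)_{\sigma}$. For the marginal $\rho_E=\tr_{AB}\ketbra{\psi}$: the cross terms between different Bell vectors vanish under the partial trace because each Bell state is maximally mixed locally, so $\rho_E$ is diagonal in $\{\ket{j}\}$ with eigenvalues $\bigl(\tfrac{1+3p}{4},\tfrac{1-p}{4},\tfrac{1-p}{4},\tfrac{1-p}{4}\bigr)$; this gives the first two terms of the claimed formula. For the conditional entropy I would, for each outcome $x\in\{0,1\}$, compute the (pure) post-measurement state on $BE$ and then trace out $B$. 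A short calculation shows that $\ket{0}_B$ and $\ket{1}_B$ project onto orthogonal subspaces of $E$, splitting $\rho_E^{(x)}$ into two orthogonal pure blocks with total weights $\tfrac{1+p}{2}$ and $\tfrac{1-p}{2}$. Hence $H(\rho_E^{(x)})=h\bigl(\tfrac{1+p}{2}\bigr)$ for both $x$, so $H(E|X)_{\sigma}=h\bigl(\tfrac{1+p}{2}\bigr)$, and subtraction yields the stated closed form.

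The only mild obstacle is bookkeeping: one must be careful that the sign flips incurred by $\ket{\psi_-}$ and $\ket{\phi_-}$ when $A$ is measured affect only the phases of Eve's branches and not the probability weights, so that the uniform distribution $P(X\!=\!0)=P(X\!=\!1)=\tfrac12$ and the two orthogonal $E$-blocks emerge cleanly. Once the purification is written in the Bell basis as above, this is essentially automatic, which is why I would set it up in that form from the outset rather than working in the computational basis of $AB$.
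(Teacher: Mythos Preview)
Your approach is correct and essentially identical to the paper's: both construct the same Bell-basis purification, measure $A$ in the computational basis, and compute $I(X;E)=H(E)-H(E|X)$ by observing that $\rho_E$ inherits the Bell-diagonal spectrum $\bigl(\tfrac{1+3p}{4},\tfrac{1-p}{4},\tfrac{1-p}{4},\tfrac{1-p}{4}\bigr)$ while each conditional $BE$ state is pure with Schmidt weights $\tfrac{1\pm p}{2}$, giving $H(E|X)=h\bigl(\tfrac{1+p}{2}\bigr)$. One small slip in your justification: the cross terms in $\rho_E$ vanish simply because the Bell states are orthonormal (so $\tr_{AB}\ket{\mathrm{Bell}_i}\bra{\mathrm{Bell}_j}=\delta_{ij}$), not because they are locally maximally mixed---but this does not affect the argument.
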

\begin{proof}
    We begin with reformulating the isotropic state as folows:
    \begin{align}
        &\rho_{\mathrm{iso},2}=\frac{1+3p}{4}\ket{\psi_+}\bra{\psi_+}+\nonumber\\
        &\frac{(1-p)}{4} \Big(\ket{\psi_-}\bra{\psi_-}+\ket{\phi_+}\bra{\phi_+}+\ket{\phi_-}\bra{\phi_-}\Big),
    \end{align}
    where $\{\ket{\psi_+}, \ket{\psi_-}, \ket{\phi_+}, \ket{\phi_-}\}$ denotes a Bell basis $\{1/\sqrt{2}(\ket{00} + \ket{11}), 1/\sqrt{2}(\ket{00} - \ket{11}), 1/\sqrt{2}(\ket{01} + \ket{10}), 1/\sqrt{2}(\ket{01} - \ket{10})\}$.
    We can then easily find its purification
    \begin{align}
    &|\psi_{ABE}\> = \\
    &\quad M|\psi_{+}\>\otimes |e_{0}\>+N \Big( |\psi_{-}\>\otimes|e_{1}\>+|\phi_{+}\>\otimes|e_{2}\>+|\phi_{-}\>\otimes|e_{3}\>\Big),\nonumber
    \end{align} 
    where $M := \sqrt{\frac{1+3p}{4}}$ and $N:=\sqrt{\frac{(1-p)}{4}}$. We rewrite the above state, to find its Schmidt decomposition in the $A:BE$ cut.
    \begin{align}
        &M|\psi_{+}\>\otimes |e_{0}\>+N \Big( |\psi_{-}\>\otimes|e_{1}\>+|\phi_{+}\>\otimes|e_{2}\>+|\phi_{-}\>\otimes|e_{3}\>\Big) = \nonumber\\
        & \frac{1}{\sqrt{2}}|0\>\Bigg[\sqrt{M^2 + N^2}|0\>\Big(\frac{M}{\sqrt{M^2 + N^2}}|e_0\> + \frac{N}{\sqrt{M^2 + N^2}}|e_1\>\Big) +\nonumber\\
        &\qquad \sqrt{2N^2}|1\>\Big(\frac{N}{\sqrt{2N^2}}|e_2\> + \frac{N}{\sqrt{2N^2}}|e_3\>\Big)\Bigg] + \\
        & \quad \frac{1}{\sqrt{2}}|1\>\Bigg[\sqrt{2N^2}|0\>\Big(\frac{N}{\sqrt{2N^2}}|e_2\> - \frac{N}{\sqrt{2N^2}}|e_3\>\Big) + \nonumber\\
        &\qquad \sqrt{M^2 + N^2}|1\>\Big(\frac{M}{\sqrt{M^2 + N^2}}|e_0\> - \frac{N}{\sqrt{M^2 + N^2}}|e_1\>\Big)\Bigg].\nonumber
    \end{align}
    The above equality can easily be checked by direct calculations. After going back to $\sqrt{\frac{1+3p}{4}}$ and $\sqrt{\frac{(1-p)}{4}}$ instead of $M$ and $N$ we obtain
    \begin{align}
        & \frac{1}{\sqrt{2}}|0\>\Big[\sqrt{\frac{1+p}{2}}|0\>\Big(\alpha|e_0\> + \beta|e_1\>\Big) +\nonumber\\
        &\qquad \sqrt{\frac{1-p}{2}}|1\>\Big(\frac{1}{\sqrt{2}}|e_2\> + \frac{1}{\sqrt{2}}|e_3\>\Big)\Big] + \\
        & \quad \frac{1}{\sqrt{2}}|1\>\Big[\sqrt{\frac{1-p}{2}}|0\>\Big(\frac{1}{\sqrt{2}}|e_2\> - \frac{1}{\sqrt{2}}|e_3\>\Big) + \\
        &\quad \qquad \sqrt{\frac{1+p}{2}}|1\>\Big(\alpha|e_0\> - \beta|e_1\>\Big)\Big],
    \end{align}
    where $\alpha = \sqrt{\frac{1+3p}{2(1+p)}}, \beta = \sqrt{\frac{1-p}{2(1+p)}}$. After the measurement of $A$ system in a computational basis, we obtain the following cqq-state
    \begin{equation}
        \rho_{XBE} = \frac{1}{2}\left(|0\>\<0|_A\otimes \phi_{BE}^{(0)} + |1\>\<1|_A\otimes \phi_{BE}^{(1)}\right),
    \end{equation}
    where $\phi_B^{(0)} = (1+p)/2|0\>\<0| + (1-p)/2|1\>\<1|$ and $\phi_B^{(1)} = (1-p)/2|0\>\<0| + (1+p)/2|1\>\<1|$. Now, it is straightforward to compute the desired quantity $I(X:E)_\rho = H(E) - \frac{1}{2}(H(\phi_{E}^{(0)}) + H(\phi_{E}^{(1)}))$. Indeed, since $H(E) = H(AB)_{\rho_{\mathrm{iso,2}}}$ and $H(\phi_{E}^{(i)}) = H(\phi_{B}^{(i)})$, we obtain
    \begin{align}
        &\mathrm{Rate}_{\mathrm{Key}_{\mathrm{DW}}\rightarrow \mathrm{PRand}} = I(X:E)_\rho = \\
        & H\left(\left\{\frac{1+3p}{4},  \frac{(1-p)}{4}, \frac{(1-p)}{4}, \frac{(1-p)}{4}\right\}\right) - h\left(\frac{1 + p}{2}\right) = \nonumber\\
        &-\frac{1+3p}4 \log_2 \left(\frac{1+3p}4\right) -\frac{3(1-p)}4 \log_2 \left(\frac{1-p}4\right) - h\left(\frac{1+p}2\right),
    \end{align}
    where $h(\cdot)$ denotes binary Shannon entropy.
\end{proof}
In the above lemma, we consider the case, in which the rate of key distillation is positive. If this is not the case, we can use result from the main theorem of \cite{YHW19}, which states that in our case, randomness on $A$ site can distilled at rate $\log|A| - H(A|B)_+$. For an isotropic state considered here, when $H(A|B)>0$, this quantity equals $2 - H(AB)$. On Fig. \ref{fig:iso_gain} we see the randomness distillation rate in both cases. Note that Fig. \ref{fig:iso_gain} uses parameter $f$, which is consistent with the form of isotropic state given in Eq. \eqref{eq:iso_state}. The relation between parameters $f$ and $p$ (which was used in this section) is given by $f(p) = (1+3p)/4$.

\bibliography{output}

\end{document}